\documentclass[journal,10pt]{IEEEtran}
\IEEEoverridecommandlockouts
\usepackage{caption} % 加载 caption 宏包
\usepackage{balance}
\usepackage{bm}
\usepackage{amsmath}
\usepackage{graphicx}
\usepackage{amssymb}
\usepackage{amsmath,amsthm} % Set the style of Theorem
\usepackage{mathtools} % For \MoveEqLeft
\usepackage{float}
\usepackage{enumerate}
\usepackage{epstopdf}
\usepackage{dblfloatfix}
\usepackage{caption}
\usepackage{amsfonts}
\usepackage{fancyhdr}
\usepackage{bbm}
\usepackage{cases}
\usepackage{cite}
\usepackage{breqn}
\usepackage{multirow}
\usepackage{mathrsfs}
\usepackage{algpascal}
\usepackage{algc}
\usepackage{algorithmicx}
\usepackage[ruled]{algorithm}
\usepackage{algpseudocode}
\usepackage{graphics}
\usepackage{epsfig}
\usepackage{mathrsfs} % the form of set
\usepackage{dsfont}
\usepackage{tabularx}
\usepackage{amsmath, amssymb}
\usepackage{amsthm}
\usepackage{color}

\usepackage{tikz}
\usetikzlibrary{shapes,arrows,positioning,calc}
\usepackage[colorlinks=true, linkcolor=blue, citecolor=blue, urlcolor=blue]{hyperref} % For hyperlinks

\usepackage{booktabs} % 导言区添加
\usepackage{subfigure}
\usepackage[spaceRequire=true,indLines=false]{algpseudocodex}

\floatname{algorithm}{\small Algorithm}

\algdef{SE}[DOWHILE]{Do}{doWhile}{\algorithmicdo}[1]{\algorithmicwhile\ #1}%

\ifCLASSINFOpdf
 \else
\fi

\begin{document}
\newtheoremstyle{theorem}% name
  {\topsep}% space above
  {\topsep}% space below
  {\itshape}% body font
  {}% indent amount
  {\itshape}% theorem head font
  {:}% punctuation after theorem head
  {.5em}% space after theorem head
  {\thmname{#1}\thmnumber{ #2}\thmnote{ (#3)}}% theorem head spec
\theoremstyle{theorem}
\newtheorem{theorem}{Theorem}
\newtheorem{definition}[theorem]{Definition}
\newtheorem{lemma}{Lemma}
\newtheorem{example}[theorem]{Example}
\newtheorem{Proposition}[theorem]{Proposition}
\newtheorem{Corollary}[theorem]{Corollary}

\title{Robust Resource Allocation for STAR-RIS-Assisted ISAC Networks Using Deep Learning}

\author{
Yuwen Cao\textsuperscript{},~\IEEEmembership{Member, IEEE}, Jiajun Wu\textsuperscript{}, Howard H. Yang\textsuperscript{},~\IEEEmembership{Member, IEEE}, Tomoaki Ohtsuki\textsuperscript{},~\IEEEmembership{Senior Member, IEEE}, and Tony Q. S. Quek\textsuperscript{},~\IEEEmembership{Fellow, IEEE}
\thanks{This paper was presented in part at IEEE VTC 2024-Fall\cite{10757612}.
Y. Cao and J. Wu are with the College of Information Science and Technology, Donghua University, Shanghai 201620, China (e-mail: ywcao@dhu.edu.cn,
wu18161020360@163.com). H. H. Yang is with the Zhejiang University (ZJU)–University of Illinois at Urbana–Champaign Institute (UIUC), Champaign, IL, USA, and also with the Zhejiang University, Haining, China (e-mail: haoyang@intl.zju.edu.cn). T. Ohtsuki is with the Department of Information and Computer Science, Keio University, Yokohama 223-8522, Japan (e-mail: ohtsuki@ics.keio.ac.jp). 
T. Q. S. Quek is with the Information Systems Technology and Design Pillar, Singapore University of Technology and Design, Singapore (e-mail:tonyquek@sutd.edu.sg).}
}

\markboth{}
{}

\maketitle

\begin{abstract}
% In this paper, we 
% focus on investigating the joint transmit beamforming and waveform designs in an integrated sensing and communication (ISAC) system
% assisted by a simultaneously transmitting and reflecting
% reconfigurable intelligent surface (STAR-RIS). 
In this paper, we 
focus on investigating the resource allocation in an integrated sensing and communication (ISAC) system
assisted by a simultaneously transmitting and reflecting
reconfigurable intelligent surface (STAR-RIS). 
Due to the coupled non-convex nature of the optimization problem, solving the above joint resource optimization problem while guaranteeing superior sensing performance
is challenging, especially in dynamic environment. To conquer such challenge, prior iterative optimization algorithms have been presented. However, practical implementations of these prior algorithms face notable computational challenges, e.g., the reliance on accurate global channel information 
% for the STAR-RIS-user links, which indeed is difficult to obtain in large networks or in dynamic environments, 
and the assumption of ideal phase shifts. In this context, we propose a model-driven deep learning framework for jointly optimizing the transmit beamforming and waveform in ISAC networks with low-overhead.
In addition, we adopt the Cramér-Rao lower bound (CRLB) as the sensing performance metric, and the multi-user interference (MUI) energy as the communication performance metric for performance evaluation. Besides, our framework jointly optimizes the transmit beamforming, waveform and the STAR-RIS reflection/transmission coefficient matrices by minimizing the weighted sum of MUI energy and CRLB. 
% Through joint optimization of the base station transmit waveform and the STAR-RIS reflection/transmission coefficient matrices, an optimization problem is proposed to minimize the weighted sum of MUI energy and CRLB. 
%Due to the coupled non-convex nature of the optimization problem, traditional iterative optimization algorithms face challenges of high computational complexity in practical applications, prompting the proposal of a model-driven deep learning algorithm.
Our study also considers the impact of imperfect channel information, making the research scenario more aligned with real-world applications. Simulation results validate that 
% under different system parameter configurations,
the proposed framework not only achieves higher sum-rate performance comparable to traditional iterative algorithms but also significantly reduces computational complexity in terms of algorithm execution time, as well as exhibits inherent robustness.
\end{abstract}

\begin{IEEEkeywords}
Integrated sensing and communication (ISAC), simultaneous transmit and reflect reconfigurable intelligent surface (STAR-RIS), deep learning, Cramér-Rao lower bound (CRLB), imperfect channel information.
\end{IEEEkeywords}

\section{Introduction}
As a prominent technique, integrated sensing and communication (ISAC) has emerged in the advent of beyond the fifth generation (B5G) and the sixth generation (6G) mobile networks, accentuating the convergence of communication and sensing paradigms \cite{9737357}. 
In light of invaluable and limited radio spectrum resources, ISAC is striving to revolutionize spectrum utilization by orchestrating the seamless coexistence of communication and sensing functionalities within a unified system and seeking their performance balance, thus significantly
enhancing spectral and energy efficiency \cite{9705498}. Moreover, ISAC pioneers the fusion of sensing and communication models, epitomized by the integration of sensing and communication capabilities directly into the infrastructure of base stations (BSs) \cite{7746569} and radar communication systems \cite{4012317}. However, practical implementation of the ISAC technique still faces notable challenges, such as high environmental reliance, severe multi-path fading, and limited
coverage, etc. 

On the other hand, 
% reconfigurable intelligent surface (RIS) is introduced as a promising technique to further enhance the communication performance. 
reconfigurable intelligent surface (RIS) has emerged as a transformative technology for wireless communications, enabling dynamic manipulation of the propagation environment to enhance coverage, spectral efficiency, and energy efficiency \cite{8796365,8741198,8910627}.
RIS consists of numerous low-cost passive reflective elements, each of which can independently modify the phase and/or amplitude of the incident signal. 
% However, this type of conventional-RIS (Con-RIS) has certain limitations. For instance, the location of the sensing users (SUs) or communication users (CUs) must be located on the same side of the RIS and BS, and full coverage cannot be realized.
To further improve the communication performance and provide additional degrees of freedom (DoFs), another new type of RIS which is called as the simultaneously transmitting and reflecting RIS (STAR-RIS) has emerged \cite{9690478}. Notably, STAR-RIS can be regarded as a bridge for communication users (CUs) and sensing users (SUs) when they are located at different sides of the STAR-RIS \cite{10757612}. 
Deploying STAR-RIS in the ISAC networks can customize the propagation environment, thereby improving the coverage range \cite{11145277,10601587}, enhancing signals of interest and suppressing interference \cite{10236498,10158937,10468784}. However, spectral resource optimization between communication and sensing to enhance spectral and energy efficiency in STAR-RIS-assisted ISAC networks without knowing accurate channel information and ideal phase shift remains challenge.  

% \textcolor{red}{
% Despite the extensive studies on the optimization of the transmit beamforming and
% STAR-RIS phase shift coefficients in ISAC systems, spectral resource optimization between communication and sensing taking into account the statistical characteristics of channel information has not been thoroughly investigated yet, particularly in
% STAR-RIS-assisted large networks or dynamic environment.
% % However, most of the previous studies design the transmit beamforming and waveform as well as  optimize the STAR-RIS reflection and transmission coefficient matrices in ISAC network based on accurate global channel information and ideal phase shift. 
% In addition, how to design energy splitting protocol in
% STAR-RIS-assisted ISAC networks deserves a further research.} 
Based on the above discussions, this paper focuses on improving the performance trade-off between CUs and SUs at the level of spectral resource
utilization in STAR-RIS-assisted ISAC system leveraging deep learning. In the following subsections, we first
review the relevant research works, followed by summarizing our contributions.

\subsection{Related Work}
%
% RIS has emerged as a transformative technology for wireless communications, enabling dynamic manipulation of the propagation environment to enhance coverage, spectral efficiency, and energy efficiency. 
Recent research in the field of the RIS technique as well as its variant has
% Recent research in this field has
explored various aspects, consisting of channel modeling, beamforming optimization, and practical deployments \cite{9140329,9551980,10506751,10591379,10596064}. Specifically, Di Renzo \textit{et al.} provided a comprehensive survey on RIS in \cite{9140329}, outlining its operating principles, hardware prototypes, and potential applications in 6G. The authors discussed how RIS can be integrated with other technologies, e.g., massive multiple-input multiple-output (MIMO) and millimeter-wave (mmWave) communications to address the challenges of path loss and interference. 
% This work established the theoretical foundation for RIS-aided systems and highlighted open research directions.
Pei \textit{et al.} conducted pioneering indoor/outdoor field trials to validate the performance of RIS-assisted wireless systems in \cite{9551980}. Their work demonstrated significant improvements in signal coverage and energy efficiency through adaptive beamforming, offering practical insights into real-world deployment challenges. 
% The study emphasized the importance of joint active-passive beamforming optimization and inspired subsequent experimental research. 
The authors in \cite{10506751} explored advanced RIS hardware designs, including transmissive RIS, stacked components, and dynamic metasurface antennas. The authors discussed how these architectures enable novel functionalities such as ISAC and energy harvesting. 
% This work highlighted the potential of RIS to support 6G’s diverse requirements, including ultra-reliable low-latency communications (URLLC) and aerial networks. 
In \cite{10591379}, the author proposed a two-level block-structured sparse Bayesian learning algorithm for RIS-assisted MIMO internet of thing (IoT) systems. This method efficiently estimates high-dimensional channels with limited pilot overhead, addressing the challenges of large-scale connectivity in IoT. 
% Simulation results showed improved estimation accuracy compared to traditional methods, making RIS a viable solution for enhancing IoT network performance.
In addition, active RIS (aRIS) in ultra-reliable low-latency communications (URLLC) systems was investigated in \cite{10596064}, which combines signal reflection and amplification to improve reliability and efficiency. The study demonstrated that small-scale aRIS outperforms large passive RIS in dynamic environments, providing critical insights for designing robust 6G communication systems. 
Reference \cite{10546993} investigated the energy efficiency of a satellite-to-ground communication system assisted by multiple RISs under both ideal and non-ideal propagation environments, where the ideal case assumes negligible shadowing, while the non-ideal case accounts for shadowing effects and path loss.

Despite the significant potential of RIS in wireless communications, its inherent limitations have gradually emerged. Traditional RIS can only reflect signals without active energy transmission, limiting performance in dynamic environments or scenarios with weak direct links. 
% For example, when users are at the edge of RIS coverage, reflected signals may be insufficient for reliable communication. The fixed reflection direction of RIS struggles to serve multi-directional users simultaneously and cannot penetrate obstacles, resulting in insufficient 3D spatial coverage.
For example, at the edge of RIS coverage, reflected signals may be too weak for reliable communication. The fixed reflection direction of RIS limits its ability to serve multi-directional users and cannot penetrate obstacles, leading to inadequate 3D spatial coverage.
High-dimensional channel estimation for RIS demands substantial pilot resources, reducing efficiency in large-scale IoT connectivity scenarios. To address these issues, STAR-RIS has been proposed. STAR-RIS integrates transmission and reflection functions to enable full-space signal manipulation, significantly enhancing system flexibility and coverage. The authors first propose the signal model of STAR-RIS and design three protocols in \cite{9570143}: i) power splitting (PS), ii) mode switching (MS), and iii) time switching (TS). By jointly optimizing the active beamforming of the BS and passive beamforming of STAR-RIS, it significantly reduces power consumption in multi-user communications. Experiments show that the TS protocol improves energy efficiency by over 30\% in unicast scenarios, and the MS protocol expands coverage by 50\% in multicast scenarios.

Furthermore, the interplay between STAR-RIS and ISAC has been studied in \cite{10050406,10835172}. More concretely, the authors in \cite{10050406} introduced the STAR-RIS technique for the first time. In addition, this paper presented a STAR-RIS-empowered ISAC framework. It first divides the sensing and communication spaces, and designs the sensing structure. Then, this work derived the Cramér–Rao bound (CRB) for angle-of-departure (AoD) estimation, followed by transforming it into a modified Fisher information matrix (FIM) optimization problem. By studying independent and coupled phase shift models, this work proposed a double-loop iterative algorithm and an alternating optimization algorithm, respectively. 
% Simulations verified that STAR-RIS outperforms traditional RIS. When communication requirements are low or components are sufficient, the two models show similar performance, and increasing passive components is more efficient. 
To address challenges involved in mmWave ISAC systems in high-mobility scenarios, reference \cite{10835172} proposed a STAR-RIS-assisted scatterer tracking scheme. By simultaneously receiving signals on both sides of the STAR-RIS, the Gaussian mixture probability hypothesis density (GM-PHD) filter is utilized to achieve real-time localization and classification of dynamic targets. 
% This study breaks through the sensing limitations of traditional ISAC systems in dynamic environments, providing a feasible solution for scenarios such as intelligent transportation.

Resource optimization in STAR-RIS-assisted ISAC networks exploiting deep learning has been investigated in \cite{10817769,10437059,10606184,10666006,11049876,11575614}. To be specific, the authors in \cite{10817769} proposed an artificial intelligence (AI)-enabled STAR-RIS aided MISO ISAC secure communication system, jointly designing transmit beamforming, receive filters, and STAR-RIS coefficients to maximize the long-term average security rate for legitimate users. For solving this non-convex optimization problem, deep reinforcement learning (DRL) algorithms, such as deep deterministic policy gradient (DDPG) and soft actor critic (SAC) are adopted. 
% Simulations verify the algorithms' effectiveness, and show that STAR-RIS outperforms benchmarks. 
Reference \cite{10437059} introduced a meta reinforcement learning (MRL) framework for full-duplex (FD) ISAC systems, where STAR-RIS assists simultaneous communication and sensing among the BS, CUs, and sensing target over the same time-frequency dimension. The goal is to maximize the sum-rate while mitigating self-interference in FD operations. In \cite{10606184}, the authors proposed a deep learning-based multi-agent DDPG (MADDPG) algorithm to solve the resource optimization problem in STAR-RIS-aided terahertz ISAC systems. It jointly optimizes beamforming and mobility parameters, to maximize the sum rate. 
% Simulation results show that MADDPG outperforms deep learning benchmark algorithms such as DDPG and proximal policy optimization (PPO) in terms of convergence and sum rate, verifying the advantages of combining STAR-RIS with dynamic delay alignment (DDA) modulation. 
In \cite{10666006}, the authors focused on studying large-scale connectivity and intelligence in 6G IoT networks based on federated learning (FL) and non-orthogonal multiple access (NOMA). This work utilized RIS to reconstruct signals and proposed a DRL-based method to jointly optimize device transmission power and RIS phase shifts, maximizing the minimum secrecy rate. 
The joint beamforming and resource allocation optimization for STAR-RIS-assisted sparse code multiple access
(SCMA) ISAC systems has been studied in \cite{11049876}, considering practical quality of service (QoS) and STAR-RIS constraints in the THz band. 
% To address this joint optimization problem, this work proposed a novel meta-DDPG algorithm, which can adapt to different conditions quickly.
% In addition, this work makes significant contributions to enhancing secure transmission and performance of 6G networks through the proposed DRL-based approach.  
% Reference \cite{10146432}
% proposed deep learning-based resource allocation approaches, which
% enable high-accuracy beam and power allocation with only a portion of the available time-sequential low-resolution images under channel uncertainties.
%
The authors in \cite{11575614}
proposed a memristor-based meta-learning (M-ML) framework to expedite spatial and temporal domain beam prediction in dynamic non-stationary environments.

\subsection{Contributions of Our Work}
Despite the extensive studies on the optimization of the transmit beamforming and
STAR-RIS phase shift coefficients in ISAC systems, spectral resource optimization between communication and sensing taking into account the statistical characteristics of channel information has not been thoroughly investigated yet, particularly in
STAR-RIS-assisted large networks or dynamic environment.
% However, most of the previous studies design the transmit beamforming and waveform as well as  optimize the STAR-RIS reflection and transmission coefficient matrices in ISAC network based on accurate global channel information and ideal phase shift. 
In addition, how to design energy splitting protocol in
STAR-RIS-assisted ISAC networks deserves a further research.

% In this paper, we investigate an ISAC system assisted by STAR-RIS and non-orthogonal multiple access (NOMA) technologies. 
Motivated by the above-mentioned, this paper focuses on optimizing the communication and sensing beamformers as well as the reflection and transmission coefficients of the STAR-RIS, to improve the performance trade-off of spectral resource utilization between CUs and SUs, while maximizing the achievable sum rate of the system. However, the optimization problem is a high-dimensional coupled non-convex problem, and traditional optimization algorithms face challenges of high complexity and signaling overhead. Based on the deep unfolding network (DUN) framework \cite{10757550}, \cite{10117541}, this paper presents
a model-driven deep learning framework for STAR-RIS-assisted ISAC network 
% designing a low-complexity optimization algorithm 
to achieve joint optimization of the communication and sensing beamformers, and the STAR-RIS phase shift coefficient matrix with a low complexity. 

Note that the core technical contributions of this paper can be summed up as follows:
\begin{itemize}
    \item 
    % This paper designs an ISAC system model assisted by STAR-RIS, which allows full-space coverage. 
    This paper designs an ISAC system model assisted by STAR-RIS.
    By leveraging the PS mode to cooperatively regulate the transmission and reflection coefficient matrices, it overcomes the semi-space coverage limitation of traditional RIS, enhancing the communication coverage and sensing accuracy in high-frequency scenarios, i.e., mmWave communications. 
    % enabling 360-degree full-space service capability.     
    % This significantly enhances the communication coverage and sensing accuracy in high-frequency scenarios, i.e., mmWave communications.
    \item We formulate a stochastic optimization problem of minimizing the weighted sum of communication and sensing performance metrics, i.e., the multi-user interference (MUI) energy and the Cramér-Rao lower bound (CRLB) of AoD estimation under constraints of the total BS transmit power and practical limitations of STAR-RIS elements. Notably, we formulate the stochastic optimization problem and design the proposed model-based deep learning framework under imperfect channel state information (CSI), rendering the solutions robust and applicable to real-world scenarios.
    % our formulation and optimization algorithm framework explicitly accounts for imperfect channel state information (CSI) by modeling additive Gaussian channel estimation errors, rendering the solutions robust and applicable to real-world scenarios.}
    %
    \item We design a model-based deep learning framework, transforming the iterative process of the alternating direction method of multipliers (ADMM) algorithm into a deep learning architecture for jointly optimizing transmit beamforming, waveform, and STAR-RIS reflection/transmission coefficients in ISAC systems. By converting complex operations into real-valued form via matrix transformations, the framework adapts to deep learning architecture. Imperfect CSI is introduced into training data to enhance the algorithm's robustness against channel uncertainties, addressing performance degradation of traditional methods under channel errors.
    \item Through simulation experiments, we demonstrate that the proposed algorithm exhibits low complexity. Furthermore, we verify that the proposed algorithm exhibits stronger robustness against channel uncertainty than the traditional ADMM algorithm, achieving a higher system sum rate under channel estimation errors. We also analyze the impacts of key system parameters on the achievable sum rate. Meanwhile, by comparing scenarios with RIS, without RIS, and assisted by orthogonal multiple access (OMA), we validate the effectiveness of the designed ISAC transmission system assisted by STAR-RIS. In particular, simulation results demonstrate that our model-based deep learning framework is capable of achieving an average execution time that is approximately~$40\%$ shorter than that of the existing state-of-the-art method.
\end{itemize}

The remainder of this work is organized as follows.
In Sec. \ref{suc:sm}, we introduce the system model and the signal model.
In Sec. \ref{sec:pmetrics}, we present the communication and sensing performance metrics adopted in this paper. In Sec. \ref{sec:pformulation}, we introduce a stochastic optimization problem. In Sec. \ref{sec:dladmm}, we present a model-based deep learning framework to address the above challenging problem with low overhead. Finally, numerical analysis and concluding remarks are provided in Secs. \ref{sec:simulationire} and \ref{sec:Con}, respectively.

 \begin{figure}
        \centering
\setlength{\belowcaptionskip}{-0.5cm}        \includegraphics[width=0.75\linewidth]{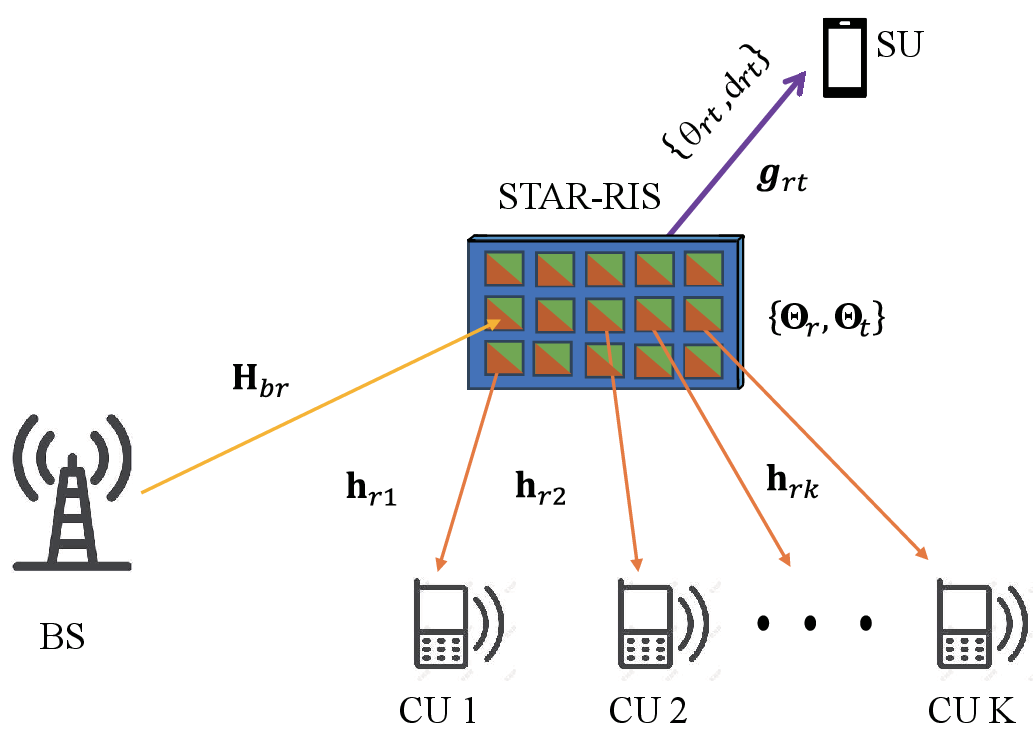}
        \caption{An overview of the STAR-RIS-assisted ISAC system model.}
        \label{fig:01}
    \end{figure}

{\it Notations}: Bold lower case letters represent vectors, while bold upper case letters
denote matrices.
$\left( {\cdot} \right)^{T}$, $\left( {\cdot} \right)^{-1}$, $\left( {\cdot} \right)^{*}$, $\left( {\cdot} \right)^{H}$, $\rm{Tr}\left( {\cdot} \right)$, $|\cdot|$, and $\left\| {\cdot} \right\|_{2}$ denote transpose, inverse, complex conjugate, Hermitian, the trace of a square matrix, the absolute value of a complex number, and $l_2$-norm operations, respectively.
%$ \oplus $ stands for the XOR operation.
$\mathbb{E}[{\cdot}]$ means the expectation of the argument. $\Re(x)$ and  $\Im(x)$ are used to denote the real part and imaginary part of variable $x$, respectively. 
$\mathbf{x}\otimes \mathbf{y}$
represents the Kronecker product between two complex vectors $\mathbf{x}$ and $\mathbf{y}$. 
$\nabla$ is used to denote the gradient of a function.
Besides, we use $\angle(\cdot{})$ to denote the argument of a complex number. In addition, vec($\mathbf{X}$) denotes the
vectorization of the matrix $\mathbf{X}$. $\text{diag}(\mathbf{x})$ returns a diagonal matrix.
% $\left\lfloor {\cdot} \right\rfloor$  indicates the flooring operator, $\left\langle  \cdot,\cdot \right\rangle $ corresponds to the inner product between two complex vectors, and $\tiny
% \left( {\begin{array}{*{20}{c}}
% {\cdot}\\
% {\cdot}
% \end{array}} \right)$ represents the binomial coefficient.

\section{System Model}
\label{suc:sm}
In this paper, we consider a nearly-passive STAR-RIS assisted mmWave communication system
that uses STAR-RIS to assist the communication and sensing among a BS, and~$K$ CUs, and an SU over the same-time frequency dimension. Fig. \ref{fig:01} illustrates the STAR-RIS-assisted ISAC system model, in which the direct communication links between the BS
and the CUs/SU are assumed to be unavailable due to blockage.\footnote{Note that our proposed model-based learning framework is valid for the scenario where the direct links between the BS and the CUs/SU exist.} 
% In our system, the STAR-RIS is equipped with $N$ near-passive scattering elements, while the BS is equipped with~$M$ transmitters/connectors.
In our system, the BS and STAR-RIS are equipped with~$M$ transmitters/connectors and~$N$ almost passive scattering elements, respectively, while each CU/SU is equipped with a single antenna.
% It can be seen from Fig. \ref{fig:01} that the BS is equipped with $M$ transmitters/connectors, STAR-RIS includes $N$ near-passive scattering elements, $K$ communication users (CU), and a single indoor sensing user (SU), which together form the STAR-RIS-assisted ISAC model. 
For convenience, herein both the STAR-RIS and the BS adopt a uniform linear array (ULA); STAR-RIS is implemented to assist in simultaneous signal reflection and refraction.

\subsection{Channel Model}
Since the STAR-RIS-assisted ISAC system in this paper operates at the mmWave band, 
% the three independent channels in the model can be constructed by using the Saleh-Valenzuela parameter channel model. 
% the independent channels between BS and STAR-RIS, between STAR-RIS and CUs, and between STAR-RIS and SU in the model can be constructed by using the Saleh-Valenzuela parameter channel model.
the independent channels between the source (BS) and the STAR-RIS, as well as between the STAR-RIS and the destination (CUs or SU), can be modeled using the Saleh-Valenzuela parameterized channel model. Specifically, as shown in Fig. \ref{fig:01}, the communication channel between STAR-RIS and CU $k$ denoted by $\mathbf{h}_{rk}\in\mathbb{C}^{N\times1}$ can be expressed as
 \begin{equation}
 \label{eq:01}
    \mathbf{h}_{rk}=\frac{e^{-\frac{j2\pi d_{rk}}{\lambda}}}{\sqrt{\rho_{rk}}}\boldsymbol{\alpha}_{{1}}(\theta_{rk}),k\in\mathcal{K}=\{1,\ldots,K\},
\end{equation}
where $d_{rk}\mathrm{~and~}\rho_{rk}$ represent the distance between STAR-RIS and CU $k$ and the path loss from the STAR-RIS to the CU $k$, respectively. For convenience, we assume in the model that $\rho_{rk}=d_{rk}^{2}$. $\lambda$ represents the 
wavelength of the carrier frequency, and $\theta_{rk}$ corresponds to the AoD of the target relative to STAR-RIS.\footnote{
In practice, we can consider the free-space path loss, which is modeled as:
\begin{align*}
    \rho_{rk}=d_{rk}^{2}f_c^2/10^{8.755}, \;  k\in\mathcal{K},
\end{align*}
where $f_c = \frac{c}{\lambda}$ (in KHz) is the carrier frequency, with $c$ being the  speed of light ($3\times 10^{8}$m/s).
} In addition, the array response vector $\boldsymbol{\alpha}_{1}(\theta_{rk})$ in Eq. (\ref{eq:01}) can be calculated as
\begin{align}
\label{eq:02}
\begin{split}
    \boldsymbol{\alpha}_{1}(\theta_{rk})=\Big[& e^{-j\frac{2\pi d_{x}}{\lambda}(\frac{N-1}{2})\mathrm{sin}(\theta_{rk})},\\&\cdots,e^{j\frac{2\pi d_{x}}{\lambda}\left(\frac{N-1}{2}\right)\mathrm{sin}(\theta_{rk})}\Big]^{T},
    \end{split}
\end{align}
where $d_{x}$ represents the spacing of inter-element, which is set to be half wavelength without loss of generality.

Similarly, the channel between STAR-RIS and SU denoted by $\mathbf{g}_{rt}$ can be expressed in the form of
\begin{equation}
\label{eq:03}
    \mathbf{g}_{rt}=\frac{e^{-j2\pi d_{rt}/\lambda}}{\sqrt{\rho_{rt}}}\boldsymbol{\alpha}_{2}(\theta_{rt}),
\end{equation}
in which $\boldsymbol{\alpha}_2(\theta_{rt})\in\mathbb{C}^{N\times1}$ is the array response vector, and in addition,
$d_{rt}$ means the distance between STAR-RIS and SU. $\rho_{rt}$ represents the path loss from the STAR-RIS to the SU, and follows the same assumption as $\rho_{rk}$, i.e., $\rho_{rt} = d_{rt}^2$. In this way, the wireless channel between the STAR-RIS and the BS denoted by $\mathbf{H}_{br}\in\mathbb{C}^{N\times M}$ can be formulated as
\begin{equation}
\label{eq:04}
    \mathbf{H}_{br}=\frac{e^{-j2\pi d_{br}/\lambda}}{\sqrt{\rho_{br}}}\boldsymbol{\alpha}_{1}(\theta_{br})(\boldsymbol{\alpha}_{2}(\theta_{br}))^{H}.
\end{equation}
Likewise, $d_{br}$ represents the distance between the BS and the STAR-RIS. For the convenience of calculation, it is assumed that $\rho_{br}=d_{br}^{2}$ represents the corresponding path loss.

In practical application scenarios, perfect channel information of ISAC networks is difficult to obtain, especially in STAR-RIS-assisted large networks or dynamic environment. Instead, to improve the feasible of practical implementation, we consider a practical application scenario where channel information
is assumed to be imperfect, thus leading to the communication and sensing performance degradation, such as limited data transmission rate, reduced target detection accuracy, etc.
% There are a series of impacts, such as limited data transmission rate, reduced target detection accuracy, etc.
% In this paper, we refer to the design of imperfect CSI in \cite{10453263}, and 
In this paper, we 
consider the \textit{additive error interference} embedded in the channel between STAR-RIS and SU denoted by $\dot{\mathbf{g}}_{rt}$, as well as in the channel between STAR-RIS and CUs denoted by $\dot{\mathbf{h}}_{rk}$, $k\in\mathcal{K}$, to simulate the imperfect CSI scenario \cite{10453263}. Accordingly, the specific channel vectors $\dot{\mathbf{g}}_{rt}$ and $\dot{\mathbf{h}}_{rk}$ can be expressed as follows:
\begin{equation}
\label{eq:05}    \dot{\mathbf{g}}_{rt}:=\mathbf{g}_{rt}+e_{rt},
\end{equation}
\begin{equation}
    \label{eq:051}\dot{\mathbf{h}}_{rk}:=\mathbf{h}_{rk}+e_{rk},
\end{equation}
where $e_{rt}$ and $e_{rk}$ follow the Gaussian distributions $\mathcal{CN}(0,\delta^2\cdot\rho_{rt})$ and $\mathcal{CN}(0,\delta^2\cdot\rho_{rk})$, respectively. Herein, $\delta^{}$ denotes the channel estimation error ratio. 
% Moreover, \textcolor{red}{$\rho_{rt}$ represents the path loss from the STAR-RIS to the SU, whereas $\rho_{rk}$ corresponds to the path loss from the STAR-RIS to the CU $k$}.

\subsection{STAR-RIS Model}
In this work, STAR-RIS operates in a PS mode, where the signal power on each element is split into two parts: one part is used to transmit the signal, and the other part is used to reflect the signal.
Let $\boldsymbol{\Theta}_{r}=\mathrm{diag}(\sqrt{\beta_{r}}e^{j\phi_{r,1}},\ldots,\sqrt{\beta_{r}}e^{j\phi_{r,N}})$ and $\boldsymbol{\Theta}_{t}=\mathrm{diag}(\sqrt{\beta_{t}}e^{j\phi_{t,1}},\ldots,\sqrt{\beta_{t}}e^{j\phi_{t,N}})$ represent the reflection and transmission coefficient matrices of STAR-RIS, respectively. In addition, $\sqrt{\beta_{r}},\sqrt{\beta_{t}}\in[0,1],\beta_{r}+\beta_{t}=1$$,n\in\mathcal{N}=\{1,...,N\}$ and $\phi_{r,n},\phi_{t,n}\in(0,2\pi]$ are the amplitude and phase shift response of the $n$-th element, respectively.

\subsection{Signal Model}
To facilitate both communication and sensing, the transmit signal of ISAC
BS during the $l$-th time slot, for $l=1,\ldots,L$, can be expressed in the form of
\begin{equation}
\label{eq:06}
    \mathbf{x}=\sum_{k=1}^K\mathbf{w}_ks_{k,l}+\mathbf{w}_ts_{t,l},
\end{equation}
where $\mathbf{w}_{k}\in\mathbb{C}^{M\times1}, k\in\mathcal{K}$, and $\mathbf{w}_{t}\in\mathbb{C}^{M\times1}$ represent the transmit beamforming matrices of the CU $k$ and SU, respectively. In addition, $s_{k,l}$ and $s_{t,l}$ correspond to the signals delivered by the BS to the $k$-th CU and the SU, respectively. In this way, the signal received by the CU $k$ during the $l$-th time slot can be mathematically expressed in the form of
\begin{equation}
\label{eq:07}
\begin{aligned}
y_{k,l}&=\dot{\mathbf{h}}_{rk}^{H}\mathbf{\Theta}_r\mathbf{H}_{br}\mathbf{w}_ks_{k,l}+\sum_{m\neq k}^K\dot{\mathbf{h}}_{rk}^{H}\mathbf{\Theta}_r\mathbf{H}_{br}\mathbf{w}_ms_{m,l}\\
&\quad+\dot{\mathbf{h}}_{rk}^{H}\mathbf{\Theta}_r\mathbf{H}_{br}\mathbf{w}_ts_{t,l} + n_{k,l},
\end{aligned}
\end{equation}
where $n_{k,l} \sim \mathcal{CN}(0,\sigma_{k}^{2})$ means the additive white Gaussian noise (AWGN) with covariance $\sigma_k^{2}$ at the $k$-th CU. Likewise, the signal observed at the sensing target can be expressed as:
\begin{equation}
  \label{eq:08} y_{s,l}=\sum_{k=1}^K\dot{\mathbf{g}}_{rt}^H\mathbf{\Theta}_t\mathbf{H}_{br}\mathbf{w}_ks_{k,l}+\dot{\mathbf{g}}_{rt}^H\mathbf{\Theta}_t\mathbf{H}_{br}\mathbf{w}_ts_{t,l}+n_{t,l},
\end{equation}
where $n_{t,l}\sim\mathcal{CN}(0,\sigma_{t}^{2})$ represents the zero-mean AWGN noise with covariance $\sigma_t^{2}$ at the SU.

\textit{Remark 1}: In traditional dual function radar communication (DFRC) systems, FD transmission is implemented to ensure that the BS having the sensing functionality can simultaneously transmit signals and receive echo signals. However, severe residual self-interference (RSI) will be involved \cite{7892029}. To mitigate such interference, there have been several works for RSI cancellation in the field of radar signal processing, which however will incur additional transmit power consumption. Based on the above analysis and for ease of practical implementation, in our signal model, the BS detects and senses the target without the need of echo signals that are often used for tracking the target in DFRC systems. That is to say, there is no reflective SU-STAR-RIS-BS link in our considered ISAC transmission system. Meanwhile, in our STAR-RIS-assisted ISAC networks, \textit{continuous phase-shifting control} strategy is adopted 
% without knowing perfect channel information is considered
for the following reasons: 
i) The discrete RIS phase shift optimization as studied in conventional discrete RIS based ISAC system, using quantization schemes \cite{9591331}, \cite{9110889}, relies heavily on perfect CSI; ii) In addition, the sensing accuracy of these discrete RIS based ISAC systems is limited to the acquisition of accurate channel information, particularly in the RIS-assisted large network or dynamic environment. iii) Prior works design the discrete phase shift configuration method. However, the number of the RIS
reflecting elements is usually limited and constrained by the
phase-shift resolution. Such approach is usually invalid for multiple RIS-assisted networks installing a large number of reflecting
elements, and cannot guarantee maximum achievable rate for
multiple RIS-aided networks.\footnote{Different from the existing studies on resource allocation for STAR-RIS-aided ISAC systems using either traditional optimization-based methods \cite{10050406,10835172} or machine learning-based approaches \cite{10817769,10437059,10606184}, \cite{11049876}, this work focuses on spectral resource allocation between communication and sensing tasks taking into account the statistical characteristics of channel information, particularly in STAR-RIS-assisted large networks or dynamic environment. In addition, to reduce the computational complexity incurred by large-scale iterations and matrix operations while maximizing the sum rate performance, we present a model-based learning algorithm, which
unfolds the iterative ADMM algorithm to a shallow and lightweight neural network.
% involves unfolding iterative ADMM algorithm into a deep learning structure to learn optimization step. 
Besides, imperfect CSI is introduced into training data to enhance our algorithm’s robustness against channel uncertainties.}

\textit{Remark 2:} 
% Typical radar sensing tasks include target detection
%  and parameter estimation.
To develop a comprehensive STAR-RIS-aided ISAC system capable of multi-task, the following section presents the joint dual-functional waveform and continuous STAR-RIS reflection and transmission coefficient
 designs for target AoD estimation and distance estimation tasks.\footnote{
To mitigate the challenge of high environmental dependence, as well as to facilitate practical deployment, 
% our ISAC system considers a realistic scenario where channel information is imperfect, which inevitably degrades both communication and sensing performance. 
we formulate a stochastic optimization problem and design a model-based deep learning framework that operates under imperfect CSI. More specifically, the proposed framework trains the DUNs on a CSI dataset under channel uncertainties, generating the waveform and STAR-RIS passive beamformer at the $t$‑th iteration according to the mapping function in Eq. (\ref{eq:51}). 
} More details are as follows.

\section{Communication and Sensing Performance Metrics}
\label{sec:pmetrics}
%
% Within the framework of the system model constructed in this paper, for the two core functional modules of communication and sensing,
In the proposed STAR-RIS-assisted ISAC network designed in this paper, to enable a better balance between the two core functional modules of communication and sensing,
the MUI energy and CRLB are adopted as the corresponding performance evaluation metrics, respectively. Specifically, the communication performance analysis takes the MUI energy as the quantitative basis, and this metric can effectively reflect the signal interference level of the system in multi-user communication scenarios. For sensing performance evaluation, the CRLB is selected as the theoretical benchmark to measure the optimal precision limit of target parameter estimation.
% thereby constructing a complete performance evaluation system.

\subsection{Communication Performance Metric}
This paper focuses on the dual-functional waveform design based on ISAC, taking the MUI energy as the communication performance metric. In communication systems, especially in multi-user communication scenarios, when multiple users transmit data over the same channel, they interfere with each other, which is referred to be as the multi-user interference. In this paper, since the dual-functional signals transmitted by the BS reach the sensing target through the same channel, for a CU $k$, $k\in \mathcal{K}$, it not only receives its own desired communication signal but also the communication signals from other CUs and the sensing signals from the SU. These signals constitute the multi-user interference for each CU $k$. As the receiver intends to accurately separate these signals, the communication performance of the system is affected. Therefore, this system adopts the MUI energy as the communication performance indicator of the system.
% \footnote{Notice that an extension of our proposed STAR-RIS-assisted ISAC framework adopting other communication performance metrics, such as spectral efficiency\cite{9690478}, energy efficiency and practical QoS, or user fairness\cite{10757612},\cite{10753571}, will be our future research topic. 
% }

Based on the received signal at CU  $ k $ given in Eq. (\ref{eq:07}), it is assumed that the desired signal $c_{\mathrm{cu},k,l}$ observed at the CU \( k \) in the $l$th time slot can be expressed as follows:
\begin{equation}
\label{eq:09}
c_{\mathrm{cu},k,l} = \dot{\mathbf{h}}_{rk}^{H} \boldsymbol{\Theta}_r \mathbf{H}_{br} \mathbf{w}_k s_{k,l}, \;k\in \mathcal{K}=\{1,\ldots,K\}.
\end{equation}
To simplify the notation, let $\dot{y}_{k,l}^{}=y_{k,l}-n_{k,l}$, $k\in \mathcal{K}$, denote the multi-user interference (excluding noise) received by the CU $k$. Then, it can be expressed as follows:
\begin{equation}
\label{eq:0011}
\begin{aligned}
\dot{y}_{k,l}^{}&=\dot{\mathbf{h}}_{rk}^{H}\mathbf{\Theta}_r\mathbf{H}_{br}\mathbf{w}_ks_{k,l}+\sum_{m\neq k}^K\dot{\mathbf{h}}_{rk}^{H}\mathbf{\Theta}_r\mathbf{H}_{br}\mathbf{w}_ms_{m,l}\\
&\quad+\dot{\mathbf{h}}_{rk}^{H}\mathbf{\Theta}_r\mathbf{H}_{br}\mathbf{w}_ts_{t,l}.
\end{aligned}
\end{equation}
As such, Eq. (\ref{eq:07}) can be re-written as follows:
\begin{equation}
\label{eq:11}
y_{k,l}=c_{\text{cu},k,l}+\underbrace{\left(\dot{y}_{k,l}^{}-c_{\text{cu},k,l}\right)}_{\text{MUI}}+n_{k,l}.
\end{equation}
Since $(\dot{y}_{k,l}^{}-c_{\text{cu},k,l})$ represents the MUI at the CU $k$, $k\in \mathcal{K}$, the signal-to-interference-plus-noise ratio (SINR) for the $k$-th CU therefore can be expressed as follows:
\begin{equation}
\label{eq:12}
\mathrm{SINR}_k=\frac{\mathbb{E}\big[|c_{\text{cu},k,l}|^2\big]}{\underbrace{\mathbb{E}\big[|\dot{y}_{k,l}^{^{}}-c_{\text{cu},k,l}|^2\big]}_{\text{MUI energy}}+\sigma_k^2}.
\end{equation}

According to \cite{10565781}, for a given channel with fixed power, minimizing the MUI energy corresponds to maximizing the achievable sum-rate of users. Additionally, the MUI energy is related to the mean squared error (MSE) between the received signal and the desired signal. Specifically, the MSE can be formulated as follows:
\begin{equation}
\label{eq:13}
\begin{aligned}\text{MSE}&:=\frac{1}{K}\biggl(\sum_{k=1}^{K} \mathbb{E}\Big[\parallel y_{k,l}-c_{\text{cu},k,l}\parallel_{2}^{2}\Big]\biggr)\\&=\underbrace{\left(\frac{1}{K}\sum_{k=1}^{K}\mathbb{E}\Big[\|y_{k,l}-c_{\text{cu},k,l}-n_{k,l}\|_{2}^{2}\Big]\right)}_{\text{MUI energy}}+\sigma_{k}^{2}\\&=\underbrace{\left(\frac{1}{K}\sum_{k=1}^{K}\mathbb{E}\Big[\|\dot{y}_{k,l}-c_{\text{cu},k,l}\|_{2}^{2}\Big]\right)}_{\text{MUI energy}}+\sigma_{k}^{2},\end{aligned}
\end{equation}
where $\text{MUI\ energy}:=\frac{1}{K}\sum_{k=1}^{K}\mathbb{E}\big[\parallel \dot{y}_{k,l}^{^{}}-c_{\text{cu},k,l}\parallel_{2}^{2}\big]$ represents the weighted sum of MUI energy for CUs. As thus, MUI energy is adopted as the communication performance metric in our paper.

Notably, in the system model each CU has a distinct AoD with respect to the STAR-RIS. One can use the delicately precoding technique, such as zero-forcing (ZF) schemes \cite{1040325},\cite{1313019}, to cancel the MUI in (\ref{eq:12}) perfectly. In this case, the criterion for designing multi-user beamformers $\mathbf{w}_i$, $i=\{1,\ldots,K\}$, and the STAR-RIS reflection coefficients $\mathbf{\Theta}_r$ has been to enforce the conditions, i.e.,
\begin{align}
    \dot{\mathbf{h}}_{ri}^{H} \boldsymbol{\Theta}_r \mathbf{H}_{br} \mathbf{w}_k = \mathbf{0}, \text{for} \;\forall \; i\in \mathcal{K}=\{1,\ldots,K\}, i\neq k.
\end{align}
The corresponding solution can achieve good performance as it completely cancels the MUI at every user. Nevertheless, we observe that: i) the above solution is sensitive to unmodeled interferences and other sources of distortion; ii) choosing the $\{\mathbf{w}_k\}$ according to the above criterion imposes a strong condition on the system configuration in terms of the number of antennas that are needed \cite{4202177}; iii) the ZF solution \cite{1040325},\cite{1313019} can lead to small per user signal-to-noise ratio (SNR) since it ignores the noise power in finding $\mathbf{w}_i$, $i=\{1,\ldots,K\}$.
For these reasons, we deduce that the resultant beamforming vectors $\mathbf{w}_i$ by using the MUI cancelation technique optimizes the SINR at each CU is not optimal relative to the beamforming vectors obtained via the sum rate optimization criterion. Therefore, in this paper we opt for the sum rate optimization criterion, which corresponds to minimizing the weighted sum of MUI energy over $\mathbf{w}_i$, to improve the communication performance of the proposed STAR-RIS-assisted ISAC system.

\textit{Remark 3:} Note that one major challenge for ISAC systems is that the constraint introduced by sensing functionality will constrain the DoF in waveform design and results in large MUI energy, thus leading to high MSE in (\ref{eq:13}) and implicitly degrading the communication performance.
Also note that, the SINR for the $k$-th CU in (\ref{eq:12}) is sensitive to the channel estimation error $e_{r,k}$.
This is because the path losses of non-line-of-sight (NLOS) BS-STAR-RIS-CU $k$ link are larger than that of the BS-STAR-RIS link, thus leading to destructive effect of the multi-path fading, and degrading the communication performance. Furthermore, one should note that since the NLOS propagation link of BS-STAR-RIS-CU $k$ is sensitive to the $e_{r,k}$, resources in terms of the \textit{temporal}- and \textit{power-domains} for these weakest channels shall be properly allocated to maintain high communication and sensing performance; Related studies will be focused in our future research.
% \footnote{Note that the potential deployment scenes of the STAR-RIS controller include the ultra-high dense urban environment, remote or coverage blind areas. These scenarios have a high demand for low-latency 6G ISAC communication. Note also that, our next step research will investigate
% task-oriented resource allocation for ISAC systems assisted by STAR-RIS, with a focus on vehicular network applications under stringent latency constraints for autonomous driving.}  

\subsection{Sensing Performance Metric}
We adopt CRLB as the sensing performance metric. The
unknown channel parameters to be estimated are those contained in $\dot{\mathbf{g}}_{rt}$, i.e., the 2-tuple vector $\boldsymbol{\nu}=[\theta_{2},d_{2}]^{T}$. Since the
additive noise is complex Gaussian distributed, by introducing
noise-free $\boldsymbol{\mu}(\boldsymbol{\nu}):= [\dot{y}_{s,1},\ldots,\dot{y}_{s,L}]^{T} = [\dot{\mathbf{g}}_{rt}^H\boldsymbol{\Theta}_t\mathbf{H}_{br}\dot{\mathbf{W}}\mathbf{s}_1,\ldots,\dot{\mathbf{g}}_{rt}^H\boldsymbol{\Theta}_t\mathbf{H}_{br}\dot{\mathbf{W}}\mathbf{s}_L]^{T} = (\dot{\mathbf{g}}_{rt}^H\boldsymbol{\Theta}_t\mathbf{H}_{br}\dot{\mathbf{W}}\mathbf{S})^{T}$ with $\dot{y}_{s,l}=\sum_{k=1}^K\dot{\mathbf{g}}_{rt}^H\mathbf{\Theta}_t\mathbf{H}_{br}\mathbf{w}_ks_{k,l}+\dot{\mathbf{g}}_{rt}^H\mathbf{\Theta}_t\mathbf{H}_{br}\mathbf{w}_ts_{t,l}$,
% \sum_{k=1}^K\dot{\mathbf{g}}_{rt}^H\boldsymbol{\Theta}_t\mathbf{H}_{br}\mathbf{w}_ks_k+\dot{\mathbf{g}}_{rt}^H\boldsymbol{\Theta}_t\mathbf{H}_{br}\mathbf{w}_ts_t$,
the FIM for $\boldsymbol{\nu}$ then is given by \cite{9364358}:
\begin{equation}
\label{eq:14}
    \mathbf{F}_{ij}:=\frac{2}{\sigma_t^{2}}\Re\{\frac{\partial\boldsymbol{\mu}^{{H}}}{\partial\boldsymbol{\nu}_{i}}\frac{\partial\boldsymbol{\mu}}{\partial\boldsymbol{\nu}_{j}}\},i,j=1,2.
\end{equation}
Herein, $\dot{\mathbf{W}} = \left[\mathbf{w}_1,\ldots,\mathbf{w}_K,\mathbf{w}_t\right]$.
In addition,
$\mathbf{S} = \left[\mathbf{s}_{_1},\ldots, \mathbf{s}_{_{L}}\right]$, and $\mathbf{s}_{_l}=[s_{1,l},\ldots,s_{K,l},s_{t,l}]^{T}$.
Specifically, the FIM matrix $\mathbf{F}$ can be written as:
\begin{equation}
\label{eq:18}
    \mathbf{F}=\begin{bmatrix}{F}_{\theta\theta}&{F}_{\theta d}\\{F}_{\theta d}^{}&{F}_{dd}\end{bmatrix},
\end{equation}
where ${F}_{\theta\theta}$, ${F}_{\theta d}$, and ${F}_{dd}$ can be respectively defined as
% \begin{equation}
% \label{eq:19}
% \begin{split}
% {F}_{\theta\theta}&=\frac{2}{\sigma_t^{2}}\Re\{(\operatorname{vec}\big(\dot{\mathbf{H}} \boldsymbol{\omega}
% \mathbf{S}\big))^{H}  \cdot  \operatorname{vec}\big(\dot{\mathbf{H}} \boldsymbol{\omega}
% \mathbf{S}\big)\}\\
% &=\frac{2}{\sigma_t^{2}} \mathrm{Tr}\Big(\mathbf{S}^{H} \boldsymbol{\omega}^{H} \dot{\mathbf{H}}^{H} \cdot \dot{\mathbf{H}} \boldsymbol{\omega}
% \mathbf{S}\Big),
% \end{split}
% \end{equation}}
\begin{equation}
\label{eq:19}
\begin{split}
{F}_{\theta\theta}&=\frac{2}{\sigma_t^{2}}\Re\{\big(\big(\dot{\mathbf{h}} \dot{\mathbf{W}}
\mathbf{S}\big)^{T}\big)^{H}  \cdot  \big(\dot{\mathbf{h}} \dot{\mathbf{W}}
\mathbf{S}\big)^{T}\}\\
&=\frac{2}{\sigma_t^{2}} \mathrm{Tr}\Big(\mathbf{S}^{T} \dot{\mathbf{W}}^{T} \dot{\mathbf{h}}^{T} \cdot \dot{\mathbf{h}}^{*} \dot{\mathbf{W}}^{*}
\mathbf{S}^{*}\Big),
\end{split}
\end{equation}

\begin{equation}
\label{eq:20}
\begin{split}
{F}_{\theta d} &=\frac{2}{\sigma_t^{2}}\Re\{((\dot{\mathbf{h}} \dot{\mathbf{W}}
\mathbf{S})^{T})^{H} \cdot (\mathbf{h} \dot{\mathbf{W}}
\mathbf{S}D)^{T}\}\\
& = -\frac{2}{\sigma_t^{2}d_{rt}} \mathrm{Tr}\Big(\mathbf{S}^{T} \dot{\mathbf{W}}^{T} {\mathbf{h}}^{T} \cdot \dot{\mathbf{h}}^{*} \dot{\mathbf{W}}^{*}
\mathbf{S}^{*}\Big),
\end{split}
\end{equation}

% \textcolor{red}{
% \begin{equation}
% \label{eq:21}
% \begin{split}
% {F}_{dd}&=\frac{2}{\sigma_t^{2}}\Re\{(\operatorname{vec}(\mathbf{H} \boldsymbol{\omega}
% \mathbf{S}D))^{H}  \cdot  \operatorname{vec}(\mathbf{H} \boldsymbol{\omega}
% \mathbf{S}D)\}\\
% &=\frac{2}{\sigma_t^{2}d_{rt}^2}\Big(1+\frac{4\pi^2d_{rt}^2}{\lambda^2}\Big) \mathrm{Tr}\Big(\mathbf{S}^{H} \boldsymbol{\omega}^{H} \mathbf{H}^{H} \cdot \mathbf{H} \boldsymbol{\omega}
% \mathbf{S}\Big),
% \end{split}
% \end{equation}}
\begin{equation}
\label{eq:21}
\begin{split}
{F}_{dd}&=\frac{2}{\sigma_t^{2}}\Re\{((\mathbf{h} \dot{\mathbf{W}}
\mathbf{S}D)^{T})^{H}  \cdot  (\mathbf{h} \dot{\mathbf{W}}
\mathbf{S}D)^{T}\}\\
&=\frac{2}{\sigma_t^{2}d_{rt}^2}\Big(1+\frac{4\pi^2d_{rt}^2}{\lambda^2}\Big) \mathrm{Tr}\Big(\mathbf{S}^{T} \dot{\mathbf{W}}^{T} \mathbf{h}^{T} \cdot \mathbf{h}^{*} \dot{\mathbf{W}}^{*}
\mathbf{S}^{*}\Big).
\end{split}
\end{equation}
% where $\mathbf{W}=[\mathbf{w}_{k},\mathbf{w}_{t}], k\in \mathcal{K}$, $\dot{\mathbf{H}}=\frac{\partial\mathbf{H}}{\partial\theta}=\mathbf{D}_{rt}\cdot\mathbf{H}$, ${D}=\frac{-\frac{j2\pi d_{rt}}{\lambda}-1}{d_{rt}}$, ${S}=[s_{k}, s_{t}]^{H}$, and $\mathbf{H}=\dot{\mathbf{g}}_{rt}^H\mathbf{\Theta}_t\mathbf{H}_{br}$. 
In addition, $\dot{\mathbf{h}}=\dot{\mathbf{g}}_{rt}^{H}\mathbf{\Theta}_{t}\mathbf{D}_{rt}^{H}\mathbf{H}_{br} \in\mathbb{C}^{1\times M}$, and $\mathbf{h} = \dot{\mathbf{g}}_{rt}^{H}\mathbf{\Theta}_{t}\mathbf{H}_{br} \in\mathbb{C}^{1\times M}$. Moreover, $D = \frac{-\frac{j2\pi d_{rt}}{\lambda}-1}{d_{rt}}$, and $\mathbf{D}_{rt}$ is defined as
% \begin{equation}
% \small
% \label{eq:17}
%     \mathbf{D}_{rt}=\mathrm{diag}\left(\left[j\pi(-\frac{N-1}{2})\cos(\theta_{rt}),\ldots,j\pi(\frac{N-1}{2})\cos(\theta_{rt})\right]\right).
% \end{equation}
\begin{equation}
\label{eq:17}
\small
\mathbf{D}_{rt}=\operatorname{diag}\Big(\Big[j\pi\big(-\frac{N-1}2\big)\cos(\theta_{rt}),\ldots,j\pi\big(\frac{N-1}2\big)\cos(\theta_{rt})\Big]\Big).
\end{equation}

Details of the
partial derivatives associated with the $\dot{\mathbf{g}}_{rt}$ parameters are given as
follows:  
% \begin{equation}
% \small
% \label{eq:15}
% \begin{split}
% \frac{\partial\mu}{\partial\theta_{rt}}&=
% \sum_{k=1}^K\dot{\mathbf{g}}_{rt}^{H}\mathbf{\Theta}_{t}\mathbf{D}_{rt}^{H}\mathbf{H}_{br}\mathbf{w}_ks_k+\dot{\mathbf{g}}_{rt}^{H}\mathbf{\Theta}_{t}\mathbf{D}_{rt}^{H}\mathbf{H}_{br}\mathbf{w}_ts_t\\
% &=\sum_{k=1}^K\dot{\mathbf{h}}\mathbf{w}_ks_k+\dot{\mathbf{h}}\mathbf{w}_ts_t,
% \end{split}
% \end{equation}
\begin{equation}
\small
\label{eq:15}
\begin{split}
\frac{\partial\boldsymbol{\mu}}{\partial\theta_{rt}}&=\Big[
\sum_{k=1}^K\dot{\mathbf{h}}\mathbf{w}_ks_{k,1}+\dot{\mathbf{h}}\mathbf{w}_ts_{t,1},\ldots,\sum_{k=1}^K\dot{\mathbf{h}}\mathbf{w}_ks_{_{k,L}}+\dot{\mathbf{h}}\mathbf{w}_ts_{_{t,L}}\Big]^{T}\\
&= \Big[\dot{\mathbf{h}}\dot{\mathbf{W}}\mathbf{s}_{_1},\ldots,\dot{\mathbf{h}}\dot{\mathbf{W}}\mathbf{s}_{_L}\Big]^{T}\\
&=\Big(\dot{\mathbf{h}}\dot{\mathbf{W}}\mathbf{S}\Big)^{T},
\end{split}
\end{equation}
% \begin{equation}
% \label{eq:16}
% \small
% \begin{split}
% \frac{\partial\mu}{\partial d_{rt}}&=D\cdot\Big(\sum_{k=1}^{K}\dot{\mathbf{g}}_{rt}^{H}\mathbf{\Theta}_{t}\mathbf{H}_{br}\mathbf{w}_ks_k+\dot{\mathbf{g}}_{rt}^{H}\mathbf{\Theta}_{t}\mathbf{H}_{br}\mathbf{w}_ts_t \Big)\\
% & = D\cdot\Big(\sum_{k=1}^{K}\mathbf{h}\mathbf{w}_ks_k+\mathbf{h}\mathbf{w}_ts_t \Big).
% \end{split}
% \end{equation}
\begin{equation}
\label{eq:16}
\small
\begin{split}
\frac{\partial\boldsymbol{\mu}}{\partial d_{rt}}
% &=D\cdot\Big(\sum_{k=1}^{K}\dot{\mathbf{g}}_{rt}^{H}\mathbf{\Theta}_{t}\mathbf{H}_{br}\mathbf{w}_ks_k+\dot{\mathbf{g}}_{rt}^{H}\mathbf{\Theta}_{t}\mathbf{H}_{br}\mathbf{w}_ts_t \Big)\\
& = D\cdot\Big[\sum_{k=1}^{K}\mathbf{h}\mathbf{w}_ks_{_{k,1}}+\mathbf{h}\mathbf{w}_ts_{_{t,1}},\ldots, \sum_{k=1}^K\mathbf{h}\mathbf{w}_ks_{_{k,L}}+\mathbf{h}\mathbf{w}_ts_{_{t,L}}\Big]^{T}\\
& = D\cdot \Big[ \mathbf{h}\dot{\mathbf{W}}\mathbf{s}_{_1},\ldots,\mathbf{h}\dot{\mathbf{W}}\mathbf{s}_{_L} \Big]^{T}\\
& = D\cdot \Big(\mathbf{h}\dot{\mathbf{W}}\mathbf{S}\Big)^{T}.
\end{split}
\end{equation}
Subsequently, we derive the CRLBs for estimating AoD and distance, which
correspond to the first and second diagonal elements of $\mathbf{F}^{-1}$, respectively, i.e.,
\begin{equation}
\label{eq:22}
    % \mathrm{CRLB}(\theta):=[\mathbf{F}^{-1}]_{1,1}=({F}_{\theta\theta}-{F}_{\theta d}{F}_{dd}{F}_{\theta d}^{T})^{-1},
    \mathrm{CRLB}(\theta):=[\mathbf{F}^{-1}]_{1,1}=({F}_{\theta\theta}-{F}_{\theta d}{F}_{dd}^{-1} {F}_{\theta d}^{})^{-1},
\end{equation}
\begin{equation}
\label{eq:0231}
    % \mathrm{CRLB}(d):=[\mathbf{F}^{-1}]_{2,2}=\left({F}_{dd}-{F}_{\theta d}{F}^{-1}_{\theta\theta}{F}_{\theta d}^T\right)^{-1}. 
    \mathrm{CRLB}(d):=[\mathbf{F}^{-1}]_{2,2}=\left({F}_{dd}-{F}_{\theta d}{F}^{-1}_{\theta\theta}{F}_{\theta d}\right)^{-1}. 
\end{equation}
\begin{lemma}
 We note that for the simplified scenario in which only a single CU and a single SU are considered in the STAR-RIS-assisted ISAC system,
 % \footnote{\textcolor{red}{Notice that the proposed model-based deep learning framework is not limited to the single sensing target scenario. An extension of the proposed STAR-RIS and NOMA-assisted ISAC transmission system to the case of multiple sensing targets will be our future research topic.}}
% Although the scheme proposed in this paper is considered for the case of one sensing target, it can be extended to the case of multiple sensing targets, which will be followed in our future research.
the corresponding $\mathrm{CRLB}(\theta)$ for estimating AoD can be derived as
\begin{align}
\small
\label{eq:23}
\begin{split}
% \mathrm{CRLB}(\theta)=\frac{\sigma_t^2}{2}\frac{1}{\mathrm{Tr}((\dot{\mathbf{H}}\mathbf{W})^H(\dot{\mathbf{H}}\mathbf{W}))-\frac{(\mathrm{Tr}((\dot{\mathbf{H}}\mathbf{W})^H(\mathbf{H}\mathbf{W}D)))^2}{\mathrm{Tr}((\mathbf{H}\mathbf{W}D)^H(\mathbf{H}\mathbf{W}D))}}.
% &\mathrm{CRLB}(\theta)=\\
% &\frac{\sigma_t^{2}}{2}\frac{1}{\mathrm{Tr}((\dot{\mathbf{H}}\boldsymbol{\omega}\mathbf{S})^{H}(\dot{\mathbf{H}}\boldsymbol{\omega}\mathbf{S}))-\frac{\lambda^2|\mathrm{Tr}((\dot{\mathbf{H}}\boldsymbol{\omega}\mathbf{S})^{H}(\mathbf{H}\boldsymbol{\omega}\mathbf{S}))|^{2}}{(\lambda^2+4\pi^2d_{rt}^2)\mathrm{Tr}((\mathbf{H}\boldsymbol{\omega}\mathbf{S})^{H}(\mathbf{H}\boldsymbol{\omega}\mathbf{S}))}}.
&{\mathrm{CRLB}}(\theta_{})=\\&\frac{\sigma_t^{2}}{2}\frac{1}{\mathrm{Tr}((\dot{\mathbf{h}}\ddot{\mathbf{W}}\dot{\mathbf{S}})^{T} (\dot{\mathbf{h}}^{} \ddot{\mathbf{W}}^{}
\dot{\mathbf{S}}^{})^{*})-\frac{\lambda^2|\mathrm{Tr}(({\mathbf{h}}\ddot{\mathbf{W}}\dot{\mathbf{S}})^{T} (\dot{\mathbf{h}}^{} \ddot{\mathbf{W}}^{}
\dot{\mathbf{S}}^{})^{*})|^{2}}{(\lambda^2+4\pi^2d_{rt}^2)\mathrm{Tr}(({\mathbf{h}}\ddot{\mathbf{W}}\dot{\mathbf{S}})^{T} ({\mathbf{h}}^{} \ddot{\mathbf{W}}^{}
\dot{\mathbf{S}}^{})^{*})}},
\end{split}
\end{align}   
where $\ddot{\mathbf{W}} = \left[\mathbf{w}_1,\mathbf{w}_t\right]$,
$\dot{\mathbf{S}} = \left[\dot{\mathbf{s}}_{_1},\ldots, \dot{\mathbf{s}}_{_{L}}\right]$, and $\dot{\mathbf{s}}_{_l}=[s_{1,l}, s_{t,l}]^{T}$.
\end{lemma}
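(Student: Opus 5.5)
Specializing to $K=1$, I would take $\dot{\mathbf{W}}=\ddot{\mathbf{W}}=[\mathbf{w}_1,\mathbf{w}_t]$ and $\mathbf{S}=\dot{\mathbf{S}}$. The plan is then to substitute the FIM entries (\ref{eq:19})--(\ref{eq:21}) into the Schur-complement expression (\ref{eq:22}) and simplify. The noise-free observation is $\boldsymbol{\mu}=(\mathbf{h}\ddot{\mathbf{W}}\dot{\mathbf{S}})^T$, where $\mathbf{h}$ and $\dot{\mathbf{h}}$ are as defined after (\ref{eq:21}). By (\ref{eq:15})--(\ref{eq:16}), its derivatives are $\partial\boldsymbol{\mu}/\partial\theta=(\dot{\mathbf{h}}\ddot{\mathbf{W}}\dot{\mathbf{S}})^T$ and $\partial\boldsymbol{\mu}/\partial d=D(\mathbf{h}\ddot{\mathbf{W}}\dot{\mathbf{S}})^T$. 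To lighten notation I set
\[
a:=\mathrm{Tr}\big((\dot{\mathbf{h}}\ddot{\mathbf{W}}\dot{\mathbf{S}})^T(\dot{\mathbf{h}}\ddot{\mathbf{W}}\dot{\mathbf{S}})^*\big),\qquad b:=\mathrm{Tr}\big((\mathbf{h}\ddot{\mathbf{W}}\dot{\mathbf{S}})^T(\mathbf{h}\ddot{\mathbf{W}}\dot{\mathbf{S}})^*\big),
\]
\[
c:=\mathrm{Tr}\big((\mathbf{h}\ddot{\mathbf{W}}\dot{\mathbf{S}})^T(\dot{\mathbf{h}}\ddot{\mathbf{W}}\dot{\mathbf{S}})^*\big).
\]
Both $a$ and $b$ are real and nonnegative, being squared norms of row vectors.

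First I would evaluate (\ref{eq:14}) for each pair of parameters. For the $(\theta,\theta)$ entry this gives $F_{\theta\theta}=\frac{2}{\sigma_t^2}a$ directly. For the $(d,d)$ entry I would use $|D|^2=\frac{1}{d_{rt}^2}\big(1+\frac{4\pi^2d_{rt}^2}{\lambda^2}\big)=\frac{\lambda^2+4\pi^2d_{rt}^2}{\lambda^2d_{rt}^2}$. This yields $F_{dd}=\frac{2}{\sigma_t^2}|D|^2 b$, consistent with (\ref{eq:21}). The cross term is $F_{\theta d}=\frac{2}{\sigma_t^2}\Re\{D\,c^*\}$, or its conjugate, depending on convention.

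Next I would substitute these into (\ref{eq:22}):
\[
\mathrm{CRLB}(\theta)=\Big(F_{\theta\theta}-\frac{F_{\theta d}^2}{F_{dd}}\Big)^{-1}=\frac{\sigma_t^2}{2}\Big(a-\frac{(\Re\{Dc^*\})^2}{|D|^2 b}\Big)^{-1}.
\]

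The main obstacle is reconciling $(\Re\{Dc^*\})^2$ with the stated $|c|^2$, since in general $(\Re\{Dc^*\})^2\le|D|^2|c|^2$ with equality only under a phase alignment. Once that term is replaced by $|D|^2|c|^2$, the factor $|D|^2$ cancels between numerator and denominator, leaving $a-|c|^2/b$. The stated formula instead carries the extra factor $\lambda^2/(\lambda^2+4\pi^2d_{rt}^2)$ multiplying $|c|^2$, which I read as $1/|D|^2$ with $d_{rt}^2$ absorbed into the trace normalisation via the $\frac{1}{\sqrt{\rho_{rt}}}$ amplitude.

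I would therefore treat the cross term following the paper's own convention in (\ref{eq:20}), which drops the $\Re$ and keeps a magnitude. Under that convention $F_{\theta d}^2$ is $\frac{4}{\sigma_t^4}\frac{|c|^2}{d_{rt}^2}$, not $\frac{4}{\sigma_t^4}|D|^2|c|^2$. Dividing by $F_{dd}=\frac{2}{\sigma_t^2}|D|^2b$ and simplifying $\frac{1}{d_{rt}^2|D|^2}=\frac{\lambda^2}{\lambda^2+4\pi^2d_{rt}^2}$ then gives exactly (\ref{eq:23}). So the key steps, in order, are: specialize to $K=1$, write the three FIM entries as $a$, $b$, $c$ with the (\ref{eq:20}) convention $F_{\theta d}\propto -c/d_{rt}$, compute the Schur complement, and simplify $1/(d_{rt}^2|D|^2)$. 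Justifying that magnitude convention for $F_{\theta d}$ is the delicate point, and I would state it explicitly as inherited from (\ref{eq:20}).
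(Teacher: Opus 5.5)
Your route is the paper's own. You specialize to $K=1$, take $F_{\theta\theta}$, $F_{dd}$ and the cross term in the form (\ref{eq:20}) (its $K=1$ version is (\ref{eq:0062})), form the Schur complement (\ref{eq:22}), and simplify with $1/(d_{rt}^2|D|^2)=\lambda^2/(\lambda^2+4\pi^2d_{rt}^2)$.

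Your caveat is also accurate, and the paper's proof relies on it without comment. (\ref{eq:20}) replaces $\Re\{D\,c\}$ (your $\Re\{D\,c^*\}$, depending on convention) by $\Re\{D\}\,c=-c/d_{rt}$, which is exact only when the cross trace $c$ is real. Consequently the factor $\lambda^2/(\lambda^2+4\pi^2d_{rt}^2)$ is simply $(\Re\{D\})^2/|D|^2$; it does not come from a path-loss normalisation absorbed into the traces, as your intermediate aside suggested.
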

\begin{IEEEproof}
Please see the proof in Appendix A.  
\end{IEEEproof}

\textit{Remark 4:} We note that the FIM matrix $\mathbf{F}$ derived in (\ref{eq:14}) is sensitive to the channel estimation error $e_{rt}$ as well, as the path losses of the NLOS BS-STAR-RIS-SU link are larger than that of the BS-STAR-RIS link, thus impacting the accuracy of AoD estimation.
% thus implicitly degrading the sensing performance, particularly at the low-SNR region. 
In addition, the energy splitting and the continuous phase-shifting optimization at the STAR-RIS impact the target detection and target AoD  estimation as well.  
As such, robust resource allocation is imperative to mitigate the influence of the channel estimation error, to further improve both the communication and sensing performances. 

\begin{lemma}
\label{lem02}
When taking into account the scenario where the sensing target is modeled as a surface with a large number of distributed point-like scatterers in the STAR-RIS-assisted ISAC system, such as a vehicle \cite{10138058}, a unmanned
aerial vehicle (UAV) \cite{8365881} or a pedestrian moving on the road \cite{9652071}, 
the corresponding unknown target response matrix to be estimated can be denoted by $\dot{\mathbf{G}}_{rt} \in \mathbb{C}^{N\times T}$. By introducing the noise-free $\boldsymbol{\mu}(\dot{\mathbf{G}}_{rt}):=[\dot{\mathbf{G}}_{rt}^H\boldsymbol{\Theta}_t\mathbf{H}_{br}\dot{\mathbf{W}}\mathbf{s}_1,\ldots,\dot{\mathbf{G}}_{rt}^H\boldsymbol{\Theta}_t\mathbf{H}_{br}\dot{\mathbf{W}}\mathbf{s}_L]^{T} = (\dot{\mathbf{G}}_{rt}^H\boldsymbol{\Theta}_t\mathbf{H}_{br}\dot{\mathbf{W}}\mathbf{S})^{T}$, 
% We note that for the simplified scenario in which only a single CU and a single SU are considered in the STAR-RIS and NOMA-assisted ISAC system,\footnote{\textcolor{red}{Notice that the proposed model-based deep learning framework is not limited to the single sensing target scenario. An extension of the proposed STAR-RIS and NOMA-assisted ISAC transmission system to the case of multiple sensing targets will be our future research topic.}
% % Although the scheme proposed in this paper is considered for the case of one sensing target, it can be extended to the case of multiple sensing targets, which will be followed in our future research.
% } 
the corresponding FIM with respect to $\dot{\bar{\mathbf{g}}}_{} = \text{vec}(\dot{\mathbf{G}}_{rt}^{})$ then can be derived as
% \begin{equation}
% \label{eq:23}
% \mathrm{CRLB}(\dot{\mathbf{G}}_{rt}) = \sum_{k=1}^{K}\frac{1}{\sigma_{t}^{2}}\mathbf{v}_{k}^{*}\mathbf{v}_{k}^{T}\otimes \mathbf{I}_{n}+\frac{1}{\sigma_{t}^{2}}\mathbf{v}_{t}^{*}\mathbf{v}_{t}^{T}\otimes \mathbf{I}_{n},
% \end{equation}}
% \begin{equation}
% \label{eq:0026}
% \mathbf{F}(\bar{\mathbf{g}}_{rt}) = \sum_{k=1}^{K}\frac{1}{\sigma_{t}^{2}}\boldsymbol{\upsilon}_{k}^{*}\boldsymbol{\upsilon}_{k}^{T}\otimes \mathbf{I}_{L}+\frac{1}{\sigma_{t}^{2}}\boldsymbol{\upsilon}_{t}^{*}\boldsymbol{\upsilon}_{t}^{T}\otimes \mathbf{I}_{L},
% \end{equation}
\begin{equation}
\label{eq:0026}
\mathbf{F}(\dot{\bar{\mathbf{g}}}_{}) = \frac{1}{\sigma_t^2}
    \Big(\mathbf{I}_{T} \otimes \mathbf{R}_{{\Lambda}}\Big),
\end{equation}
% for estimating the extended target parameter $\dot{\mathbf{G}}_{rt}$ can be derived as
% \begin{equation}
% \label{eq:23}
% \mathrm{CRLB}(\dot{\mathbf{G}}_{rt}) = \sum_{k=1}^{K}\frac{1}{\sigma_{t}^{2}}\mathbf{v}_{k}^{*}\mathbf{v}_{k}^{T}\otimes \mathbf{I}_{n}+\frac{1}{\sigma_{t}^{2}}\mathbf{v}_{t}^{*}\mathbf{v}_{t}^{T}\otimes \mathbf{I}_{n},
% \end{equation}}
where $\mathbf{R}_{{\Lambda}} := \boldsymbol{\Lambda} \boldsymbol{\Lambda}_{}^{H} = (\boldsymbol{\Theta}_{t}\mathbf{H}_{br}\dot{\mathbf{W}}\mathbf{S})(\boldsymbol{\Theta}_{t}\mathbf{H}_{br}\dot{\mathbf{W}}\mathbf{S})^H$. In addition, $\mathbf{I}_T$ is an $T\times T$ identity matrix.
\end{lemma}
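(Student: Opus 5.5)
The plan is to treat the model as a complex linear Gaussian model in the unknown $\dot{\bar{\mathbf{g}}}=\text{vec}(\dot{\mathbf{G}}_{rt})$ and apply the FIM formula (\ref{eq:14}) with $\boldsymbol{\nu}$ replaced by the complex vector $\dot{\bar{\mathbf{g}}}$. For complex-valued parameters with circular Gaussian noise, the natural form is $\mathbf{F}=\frac{1}{\sigma_t^2}\big(\frac{\partial\boldsymbol{\mu}}{\partial\dot{\bar{\mathbf{g}}}^{T}}\big)^{H}\frac{\partial\boldsymbol{\mu}}{\partial\dot{\bar{\mathbf{g}}}^{T}}$. This is the Wirtinger version of (\ref{eq:14}), and the factor $2\Re\{\cdot\}$ collapses to a single term. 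The derivative $\partial\boldsymbol{\mu}/\partial\dot{\bar{\mathbf{g}}}^T$ is taken with respect to $\dot{\bar{\mathbf{g}}}$ or $\dot{\bar{\mathbf{g}}}^*$, depending on the conjugation convention chosen below.

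The first step is to write the noise-free observation as a linear function of the unknown. Set $\boldsymbol{\Lambda}=\boldsymbol{\Theta}_t\mathbf{H}_{br}\dot{\mathbf{W}}\mathbf{S}\in\mathbb{C}^{N\times L}$, so that $\boldsymbol{\mu}=(\dot{\mathbf{G}}_{rt}^H\boldsymbol{\Lambda})^T=\boldsymbol{\Lambda}^T\dot{\mathbf{G}}_{rt}^{*}$, an $L\times T$ matrix.

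The second step is to vectorize using the identity $\text{vec}(\mathbf{A}\mathbf{X}\mathbf{B})=(\mathbf{B}^T\otimes\mathbf{A})\text{vec}(\mathbf{X})$ with $\mathbf{B}=\mathbf{I}_T$. This gives
\begin{equation*}
\text{vec}(\boldsymbol{\mu})=(\mathbf{I}_T\otimes\boldsymbol{\Lambda}^T)\,\dot{\bar{\mathbf{g}}}^{*}.
\end{equation*}
The model is therefore linear in $\dot{\bar{\mathbf{g}}}^*$, and its Jacobian is $\mathbf{J}=\mathbf{I}_T\otimes\boldsymbol{\Lambda}^T$, which does not depend on the parameter.

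The third step is to form the FIM. Using the mixed-product rule,
\begin{equation*}
\mathbf{J}^H\mathbf{J}=(\mathbf{I}_T\otimes\boldsymbol{\Lambda}^*)(\mathbf{I}_T\otimes\boldsymbol{\Lambda}^T)=\mathbf{I}_T\otimes(\boldsymbol{\Lambda}\boldsymbol{\Lambda}^H)^*.
\end{equation*}
Parametrizing instead by $\dot{\bar{\mathbf{g}}}$, which amounts to conjugating the FIM of the conjugate parameter, gives $\frac{1}{\sigma_t^2}(\mathbf{I}_T\otimes\boldsymbol{\Lambda}\boldsymbol{\Lambda}^H)=\frac{1}{\sigma_t^2}(\mathbf{I}_T\otimes\mathbf{R}_{\Lambda})$, which is (\ref{eq:0026}).

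The main obstacle is bookkeeping rather than substance. Two points need care. The first is to fix the convention for complex parameters, since the real-parameter formula (\ref{eq:14}) with its factor $2$ gives the real $2NT\times 2NT$ FIM. I would state explicitly that the complex FIM of a circular linear model is $\mathbf{J}^H\mathbf{J}/\sigma_t^2$, citing the standard result used in \cite{9364358}. The second is to track the transpose and conjugate that the definition $\boldsymbol{\mu}=(\cdot)^T$ introduces. If the ordering of the vectorization (column-wise, with $T$ blocks of size $N$) is confirmed, the identity $\mathbf{I}_T$ lands on the left of the Kronecker product as stated. Finally, I would note that the same computation applies unchanged with imperfect CSI, because the estimation error is absorbed into $\dot{\mathbf{G}}_{rt}$ and $\boldsymbol{\Lambda}$ does not depend on it.
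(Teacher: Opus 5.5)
Your proposal is correct and follows the paper's route. Both write $\boldsymbol{\mu}$ as a linear function of the vectorized target response via $\mathrm{vec}(\mathbf{A}\mathbf{X}\mathbf{B})=(\mathbf{B}^T\otimes\mathbf{A})\mathrm{vec}(\mathbf{X})$, read off the Kronecker-structured Jacobian, and apply the mixed-product rule to get $\frac{1}{\sigma_t^2}(\mathbf{I}_T\otimes\mathbf{R}_{\Lambda})$.

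The only difference is how the conjugate is handled. The paper moves it onto $\boldsymbol{\Lambda}$ using $(\mathbf{I}\otimes\mathbf{B}^T)\mathrm{vec}(\mathbf{C}^*)=(\mathbf{I}\otimes\mathbf{B}^H)\mathrm{vec}(\mathbf{C})$, which actually holds only up to complex conjugation. You instead keep the model linear in $\dot{\bar{\mathbf{g}}}^*$ and then conjugate the FIM (equivalently, work with the conjugated observation), which correctly justifies the same final expression.
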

\begin{IEEEproof}
Please see the proof in Appendix B.  
\end{IEEEproof}

\begin{lemma}
\label{lem:lem03}
The corresponding $\mathrm{CRLB}(\dot{\bar{\mathbf{g}}})$ for estimating the extended target parameter $\dot{\bar{\mathbf{g}}}_{}$ then can be derived as
\begin{equation}
\label{eq:0027}
\mathrm{CRLB}(\dot{\bar{\mathbf{g}}}) =
\mathrm{Tr\Big(\mathbf{F}^{-1}(\dot{\bar{\mathbf{g}}}_{}) \Big)}= T \sigma_{t}^{2} \cdot \mathrm{Tr\Big(\mathbf{R}_{{\Lambda}}^{-1}\Big)}.
% = \left((\mathbf{I}_{T} \otimes \boldsymbol{\Lambda}^{*}\boldsymbol{\Lambda}_{}^{T})\mathbf{A}\right)^{-1}.
\end{equation}
\end{lemma}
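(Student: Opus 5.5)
The plan is to start directly from the FIM in Lemma~\ref{lem02} and use elementary properties of the Kronecker product. By Lemma~\ref{lem02}, $\mathbf{F}(\dot{\bar{\mathbf{g}}}) = \frac{1}{\sigma_t^2}\big(\mathbf{I}_T \otimes \mathbf{R}_{\Lambda}\big)$. The CRLB for the vectorized unknown $\dot{\bar{\mathbf{g}}}=\mathrm{vec}(\dot{\mathbf{G}}_{rt})$ is the sum of the per-entry bounds. Each per-entry bound is a diagonal element of $\mathbf{F}^{-1}(\dot{\bar{\mathbf{g}}})$, so the quantity of interest is $\mathrm{Tr}\big(\mathbf{F}^{-1}(\dot{\bar{\mathbf{g}}})\big)$. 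This gives the first equality in (\ref{eq:0027}) by definition.

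For the second equality I will first invert the Kronecker structure. Using $(\mathbf{A}\otimes\mathbf{B})^{-1}=\mathbf{A}^{-1}\otimes\mathbf{B}^{-1}$ for nonsingular $\mathbf{A},\mathbf{B}$, together with $\mathbf{I}_T^{-1}=\mathbf{I}_T$ and pulling out the scalar $1/\sigma_t^2$, I obtain
\begin{equation*}
\mathbf{F}^{-1}(\dot{\bar{\mathbf{g}}}) = \sigma_t^2\big(\mathbf{I}_T\otimes \mathbf{R}_{\Lambda}^{-1}\big).
\end{equation*}
Next I apply the trace identity $\mathrm{Tr}(\mathbf{A}\otimes\mathbf{B})=\mathrm{Tr}(\mathbf{A})\,\mathrm{Tr}(\mathbf{B})$. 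Equivalently, $\mathbf{I}_T\otimes\mathbf{R}_{\Lambda}^{-1}$ is block diagonal with $T$ copies of $\mathbf{R}_{\Lambda}^{-1}$. Since $\mathrm{Tr}(\mathbf{I}_T)=T$, this yields $\mathrm{Tr}\big(\mathbf{F}^{-1}(\dot{\bar{\mathbf{g}}})\big)=T\sigma_t^2\,\mathrm{Tr}(\mathbf{R}_{\Lambda}^{-1})$, which is (\ref{eq:0027}).

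The only nontrivial step, and the one I expect to be the main obstacle, is justifying that $\mathbf{R}_{\Lambda}=\boldsymbol{\Lambda}\boldsymbol{\Lambda}^H\in\mathbb{C}^{N\times N}$ is invertible, so that the inverse above exists. Here $\boldsymbol{\Lambda}=\boldsymbol{\Theta}_t\mathbf{H}_{br}\dot{\mathbf{W}}\mathbf{S}$. The matrix $\mathbf{R}_{\Lambda}$ is Hermitian positive semidefinite, and it is positive definite exactly when $\boldsymbol{\Lambda}$ has full row rank $N$. This requires three things. First, $\boldsymbol{\Theta}_t$ must be nonsingular, which holds whenever $\beta_t>0$. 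Second, $L\ge N$ and $\mathbf{S}$ must have full row rank. Third, and crucially, $\mathbf{H}_{br}\dot{\mathbf{W}}$ must have rank $N$.

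The purely line-of-sight model (\ref{eq:04}) makes $\mathbf{H}_{br}$ rank one, so the third condition cannot hold under that model for $N>1$. I would therefore state the lemma under the standing assumption that $\boldsymbol{\Lambda}$ has full row rank. In practice, this corresponds to the BS–STAR-RIS link having enough rank and the waveform being sufficiently rich. If that assumption fails, I would interpret $\mathbf{R}_{\Lambda}^{-1}$ as the Moore–Penrose pseudo-inverse. The Kronecker identities still hold for pseudo-inverses, giving $(\mathbf{I}_T\otimes\mathbf{R}_{\Lambda})^{\dagger}=\mathbf{I}_T\otimes\mathbf{R}_{\Lambda}^{\dagger}$, so the same trace computation goes through verbatim. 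The bound then applies only to the identifiable subspace of $\dot{\bar{\mathbf{g}}}$.
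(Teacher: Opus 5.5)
Your proof is correct and follows the paper's route. The paper plugs the FIM $\frac{1}{\sigma_t^2}(\mathbf{I}_T\otimes\mathbf{R}_{\Lambda})$ of Lemma~\ref{lem02} into the Cram\'er--Rao inequality and takes the trace; your steps $(\mathbf{I}_T\otimes\mathbf{R}_{\Lambda})^{-1}=\mathbf{I}_T\otimes\mathbf{R}_{\Lambda}^{-1}$ and $\mathrm{Tr}(\mathbf{A}\otimes\mathbf{B})=\mathrm{Tr}(\mathbf{A})\,\mathrm{Tr}(\mathbf{B})$ make that substitution explicit. Your reading of the bound as the summed error variance of $\dot{\bar{\mathbf{g}}}$ is cleaner than the paper's inequality, which is written for $\boldsymbol{\mu}-\hat{\boldsymbol{\mu}}$.

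Your invertibility caveat agrees with Remark~5, which sets $\mathrm{CRLB}(\dot{\bar{\mathbf{g}}})=\infty$ when $\mathrm{rank}(\mathbf{R}_{\Lambda})<N$. You are also right that the rank-one $\mathbf{H}_{br}$ of (\ref{eq:04}) always falls into that case for $N>1$, so the finite formula needs a richer BS--STAR-RIS channel than the paper's own model.
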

\begin{IEEEproof}
For the above estimators of the target response parameters (unbiased $\hat{\boldsymbol{\mu}}(\dot{\mathbf{G}}_{rt})$), we calculate the $\mathrm{CRLB}$ on the error covariance matrix as follows: 
\begin{align}
\label{eq:0028}
    \mathbb{E}\big[(\boldsymbol{\mu}-\hat{\boldsymbol{\mu}}(\dot{\mathbf{G}}_{rt}))(\boldsymbol{\mu}-\hat{\boldsymbol{\mu}}(\dot{\mathbf{G}}_{rt}))^{H}\big]\succeq\mathbf{F}^{-1}(\dot{\bar{\mathbf{g}}}_{}) ,
\end{align}
where the notation $\mathbf{X} \succeq \mathbf{Y}$ for square matrices $\mathbf{X}$ and $\mathbf{Y}$ means $\mathbf{x}^{H}\mathbf{X}\mathbf{x} \geq \mathbf{x}^{H}\mathbf{Y}\mathbf{x}$ for any valid vector $\mathbf{x}$. Then, substituting the FIM with respect to $\dot{\bar{\mathbf{g}}}_{}$ derived in (\ref{eq:0026}) into the above formula (\ref{eq:0028}) yields the corresponding $\mathrm{CRLB}(\dot{\bar{\mathbf{g}}})$ for estimating the extended target parameter in (\ref{eq:0027}). This completes the proof.
\end{IEEEproof}

\textit{Remark 5:} By inspection, we observe that if $\text{rank}(\mathbf{R}_{{\Lambda}}) < N$, then the FIM $\mathbf{F}(\dot{\bar{\mathbf{g}}}_{})$ in (\ref{eq:0026}) for estimating the target response matrix $\dot{\mathbf{G}}_{rt}$ is a singular matrix, thus leading to $\mathrm{CRLB}(\dot{\bar{\mathbf{g}}})  = \infty$. Otherwise, if $\text{rank}(\mathbf{R}_{{\Lambda}}) = N$, then the FIM $\mathbf{F}(\dot{\bar{\mathbf{g}}}_{})$ in (\ref{eq:0026}) is invertible, thereby yielding bounded  $\mathrm{CRLB}(\dot{\bar{\mathbf{g}}}) $ in (\ref{eq:0027}).

\section{Problem Formulation}
\label{sec:pformulation}
In this section, we propose to construct a resource allocation optimization problem for the STAR-RIS-assisted ISAC system based on adopted communication and sensing performance metrics, 
% enabling the STAR-RIS and NOMA-assisted ISAC system to achieve a balance between communication and sensing performance, thereby 
for improving the performance trade-off between CUs and SU at the level of spectral resource utilization.\footnote{Notice that integrating our proposed model-based learning approach into the integrated sensing, communication, and computing (ISCC) and satellite networks \cite{Wen2024ASO} may improve the performance trade-off of spectral resource utilization among sensing, communication, and computing tasks. In addition, our proposed scheme is applicable to the scenario with imperfect CSI. Thus, integrating our approach into satellite networks will render the solution robust and applicable to real-world scenarios.}

By jointly optimizing the BS transmission waveform and the reflection and transmission coefficient matrices of the STAR-RIS, we aim to minimize the weighted sum of the MUI energy and the CRLB, while satisfying i) the total power constraint and ii) the STAR-RIS phase shift constraint. Considering the imperfect CSI with additive interference, the channel information exhibits \textit{statistical characteristics}. Therefore, a stochastic optimization problem under imperfect CSI is formulated. The optimization problem is expressed as a minimum expectation problem, and the whole joint optimization problem can be formulated as follows:\footnote{Note that, in the system model the spatial division multiple access (SDMA) is employed to enable the multiplexing of same-frequency signals over multiple spatial paths.}
\begin{subequations}
\label{24}
\begin{equation}
\min_{\{\mathbf{w}_k\},\mathbf{w}_t,\mathbf{\Theta}_t,\mathbf{\Theta}_r}\mathbb{E}\Big[ \text{MUI energy} +\zeta\cdot\mathrm{CRLB}(\theta)\Big] \label{eq:24a}
\end{equation}
\begin{equation}
\text{s.t.} \hspace{1.4cm}\sum_{k=1}^K||\mathbf{w}_k||^2+||\mathbf{w}_t||^2\leq P_{max}, \label{eq:24b}
\end{equation}
\begin{equation}
\beta_{r}+\beta_{t}=1,\beta_{i}\in[0,1], \phi_{i,n}\in(0,2\pi], i\in\{t,r\},n\in \mathcal{N},
\end{equation}
% \begin{equation}
% i\in\{t,r\},n\in \mathcal{N}=\{1,\ldots,N\},
% \end{equation}
in which $P_{max}$ limits the maximum transmit power of the BS. In addition, the constraint condition (\ref{24}c) can be rewritten as
\begin{equation}
|\boldsymbol{\Theta}_r(n)|^2 + |\boldsymbol{\Theta}_t(n)|^2 = 1, \, \text{for} \;\forall \; n\in \mathcal{N},
\end{equation}
\end{subequations}
where $\mathbf{\Theta}_r(n)$ and $\mathbf{\Theta}_t(n)$ denote the elements of the \( n \)-th row and \( n \)-th column of the reflection and transmission coefficient matrices of the STAR-RIS, respectively. $\zeta$ denotes the trade-off coefficient between communication and sensing. A larger coefficient implies more focus on minimizing the CRLB. In the optimization problem, it can be clearly seen that there are coupling relationships among the transmit beamforming vectors, and the transmission coefficient matrix and the reflection coefficient matrix of the STAR-RIS.

    \begin{figure*}
        \centering
        \includegraphics[width=1\linewidth]{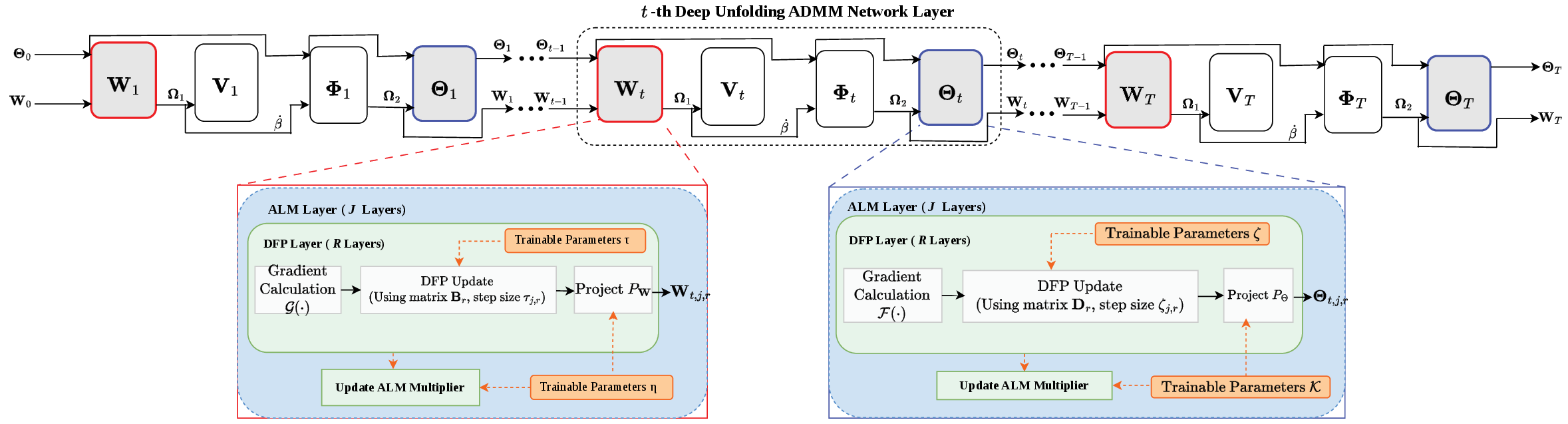}
        \caption{ Illustration of the proposed model-based learning framework.}
        \label{fig:0002}
    \end{figure*}

\section{Proposed Model-Based Deep Learning Framework}
\label{sec:dladmm}
%
% Referencing the design of model-driven deep learning algorithms in \cite{10757550}, 
In this section, we integrate the traditional mathematical optimization algorithm, i.e., ADMM \cite{10757550}, into deep learning framework. In this context, it is abbreviated as DL-ADMM, which is capable of decomposing complex problems into a series of iterative sub-problems that can be solved using unfolding-friendly algorithms. In addition, it learns the hyperparameters of the unfolding algorithm through DUN, as illustrated in Fig. \ref{fig:0002}. Specifically, we adopt an  unfolding-friendly augmented lagrangian method (ALM) to address the total power constraint, and leverage an unfolded improved davidon-fletcher-powell (DFP) algorithm to overcome the challenges imposed by the non-convex objective function  (\ref{eq:24a}) in the unfolding framework based on iterative ADMM.

\subsection{ADMM-Based Unfolding Framework}
For convenience, we first introduce auxiliary variables $\boldsymbol{V}$ and $\boldsymbol{\Phi}$ as follows:
% \begin{equation}
% \label{eq:25}
% \boldsymbol{V}=[\{\mathbf{v}_k\},\mathbf{v}_t],\quad k\in \mathcal{K},\quad\boldsymbol{\varphi}=[\mathbf{\Theta}_r,\mathbf{\Theta}_t].
% \end{equation}
\begin{equation}
\label{eq:25}
\boldsymbol{V}=[\{\mathbf{v}_k\},\mathbf{v}_t],\quad k\in \mathcal{K},\quad\boldsymbol{\Phi}=[\boldsymbol{\varphi}_r,\boldsymbol{\varphi}_t].
\end{equation}
Likewise, we let
\begin{equation}
\label{eq:26}
\mathbf{W}=[\{\mathbf{w}_k\},\mathbf{w}_t],\quad k\in \mathcal{K}, \quad \mathbf{\Theta}=[\mathbf{\Theta}_r,\mathbf{\Theta}_t].
\end{equation}
Then, we transform the original challenging optimization problem (\ref{24}) into the following optimization problem (\ref{27}):
\begin{subequations}
\label{27}
\begin{equation}
\min_{\mathbf{W},\mathbf{\Theta},\mathbf{V},\boldsymbol{\Phi}}\quad f(\mathbf{W},\mathbf{\Theta}), \label{eq:27a}
\end{equation}
\begin{equation}
\text{s.t.} \hspace{0.5cm}\sum_{k=1}^K||\mathbf{w}_k||^2+||\mathbf{w}_t||^2\leq P_{max}, \label{eq:27b}
\end{equation}
\begin{equation}
|\boldsymbol{\Theta}_r(n)|^2 + |\boldsymbol{\Theta}_t(n)|^2 = 1, \label{eq:27c}
\end{equation}
\begin{equation}
\mathbf{v}_k=\mathbf{w}_k,\,\mathbf{v}_t=\mathbf{w}_t,\,k\in \mathcal{K},
\end{equation}
\begin{equation}
\boldsymbol{\varphi}_r=\boldsymbol{\Theta}_r,\, \boldsymbol{\varphi}_t=\boldsymbol{\Theta}_t,\label{eq:27e}
\end{equation}
\begin{equation}
\sum_{k=1}^K||\mathbf{v}_k||^2+||\mathbf{v}_t||^2\leq P_{max},\label{eq:24f}
\end{equation}
\end{subequations}
where $f(\mathbf{W},\mathbf{\Theta})$ is defined as
\begin{equation}
\label{eq:28}
f(\mathbf{W},\mathbf{\Theta})=\mathbb{E}\left[\frac{1}{K}\sum_{k=1}^K\mathbb{E}\Big[\|\dot{y}_{k,l}-c_{\text{cu},k,l}\|_2^2\Big]\right]+\zeta\cdot[\mathbf{F}^{-1}]_{1,1},
\end{equation}
and then we define
\begin{equation}
\label{eq:29}
q(\mathbf{w}_k,\mathbf{w}_t)=\sum_{k=1}^{K}||\mathbf{w}_k||^2+||\mathbf{w}_t||^2-P_{max}\leq0,
\end{equation}
\begin{equation}
\label{eq:30}
g_{n}(\boldsymbol{\Theta}_{t},\boldsymbol{\Theta}_{r})=|\boldsymbol{\Theta}_{r}(n)|^2+|\boldsymbol{\Theta}_{t}(n)|^2-1=0.
\end{equation}

Subsequently, based on the unfolded framework of ADMM, the problem can be transformed into:
\begin{equation}
\label{eq:31}
\min_{\mathbf{W},\mathbf{\Theta},\mathbf{V},\boldsymbol{\Phi}}\quad L(\mathbf{W},\mathbf{\Theta},\dot{\beta},\dot{\gamma}),
\end{equation}
where $\dot{\beta}$ and $\dot{\gamma}$ are Lagrange multipliers, $L(\mathbf{W},\mathbf{\Theta},\dot{\beta},\dot{\gamma})$ is the Lagrangian of the original problem, and the augmented Lagrangian function can be expressed as follows:
\begin{align}
\small
\label{eq:32}
\begin{split}L(\mathbf{W},\mathbf{\Theta},\dot{\beta},\dot{\gamma})=&f(\mathbf{W},\mathbf{\Theta})+\sum_{n=1}^N\dot{\beta}_ng_n(\boldsymbol{\Theta_t},\boldsymbol{\Theta_r})+\dot{\gamma} q(\mathbf{w}_k,\mathbf{w}_t)\\&+\frac{\dot{\rho}}{2}\|\mathbf{W}-\boldsymbol{V}+\frac{\boldsymbol{\Omega}_1}{\dot{\rho}}\|^2+\frac{\dot{\rho}}{2}\|\boldsymbol{\Theta}-\boldsymbol{\Phi}+\frac{\boldsymbol{\Omega}_2}{\dot{\rho}}\|^2,\end{split}
\end{align}
where \(\dot{\rho} > 0\) is the penalty parameter, and $\boldsymbol{\Omega}_1$  and \(\boldsymbol{\Omega}_2\) are the introduced dual variables. With the aid of the projection-based method, constraints (\ref{eq:27b}) and (\ref{eq:27c}) can always be satisfied in the alternating iterative algorithm.

\subsection{Update $\mathbf{W}$}
In the $t$-th ADMM iteration, given the target problem of $\boldsymbol{V}_t,\mathbf{\Theta}_t,\boldsymbol{\Phi}_t, \boldsymbol{\Omega}_1$ and $\boldsymbol{\Omega}_2$ with respect to $\mathbf{W}$, it can be transformed into an unconstrained problem. By adopting the unfolded ALM with the initial Lagrangian multiplier $\dot{\gamma}_{1}$, the Lagrangian function $L(\mathbf{W},\boldsymbol{\Theta},\dot{\beta}_{j},\dot{\gamma}_j)$ is constructed. For the $j$-th unfolded ALM iteration, $\mathbf{W}_{j+1}$ and $\dot{\gamma}_{j+1}$ can be determined by
\begin{equation}
\label{eq:33}
\mathbf{W}_{j+1}=\arg\min_{\mathbf{W}}L(\mathbf{W}_j,\boldsymbol{\Theta}_j,\dot{\beta}_j,\dot{\gamma}_j),
\end{equation}
% \begin{equation}
% \label{eq:34}
% \dot{\gamma}_{j+1}=\max\{\dot{\gamma}_j+\eta_jq(\mathbf{W}_j,0)\},
% \end{equation}
\begin{equation}
\label{eq:34}
\dot{\gamma}_{j+1}=\max\{\dot{\gamma}_j+\eta_jq_j(\mathbf{w}_k,\mathbf{w}_t),0\},
\end{equation}
where $\eta_{j}$ denotes the trainable dual update parameter. Sequences $\{\dot{\beta}_{j}\}$ and $\{\mathbf{W}_j\}$ are respectively obtained after the ALM iterations. Note that, despite the problem in (\ref{eq:31}) has been simplified, it is still intractable to obtain the optimal $\mathbf{W}$ because of the non-convex subproblem. This challenge can be tackled by introducing the unfolded DFP algorithm. Specifically, the $r$-th iteration of the generalized quasi-Newton method is formulated as follows:
\begin{equation}
\label{eq:35}
\mathbf{W}_{j,r+1}=\mathbf{W}_{j,r}-\tau_{j,r}\mathbf{B}_r\mathcal{G}(\mathbf{W}_{j,r}),
\end{equation}
where $\mathcal{G}(\mathbf{W}_{j,r})=\nabla_{\mathbf{W}}L(\mathbf{W}_{j,r},\boldsymbol{\Theta}_{j},\dot{\beta}_{j},\dot{\gamma}_j)$, and $\tau_{j,r}$ is trainable step-size parameter for gradient-based algorithms \cite{10757550}, and $\mathbf{B}_{r}$ is a symmetric positive definite matrix whose iterative calculation formula is given by
\begin{equation}
\label{eq:36}
\mathbf{B}_{r+1}=\left\{\begin{array}{ll}\mathbf{B}_{r},\quad\quad\qquad\qquad\qquad\qquad\frac{\mathbf{y}_{r}^{T}\delta_{r}}{\|\delta_{r}\|_{2}^{2}}\leq\|\mathcal{G}(\mathbf{W}_{r})\|,\\\mathbf{B}_{r}-\frac{\mathbf{B}_{r}\mathbf{y}_{r}\mathbf{y}_{r}^{T}\mathbf{B}_{r}}{\mathbf{y}_{r}^{T}\mathbf{B}_{r}\mathbf{y}_{r}}+\frac{\mathbf{y}_{r}^{T}\mathbf{B}_{r}\mathbf{y}_{r}}{(\mathbf{y}_{r}^{T}\boldsymbol{\delta}_{r})^{2}}\boldsymbol{\delta}_{r}\boldsymbol{\delta}_{r}^{T},\mathrm{otherwise},\end{array}\right.
\end{equation}
where $\mathbf{y}_r=\mathcal{G}(\mathbf{W}_r)-\mathcal{G}(\mathbf{W}_{r-1})$, $\boldsymbol{\delta}_r=\mathbf{W}_r-\mathbf{W}_{r-1}$. To satisfy the power constraint, we project $\mathbf{W}_{j,r}$ onto the feasible set:
\begin{equation}
\label{eq:37}
\widehat{\mathbf{W}}_{j,r} = 
\begin{cases} 
\mathbf{W}_{j,r}, & \sum_{k=1}^K \|\mathbf{w}_k\|^2 + \|\mathbf{w}_t\|^2 \leq P_{\text{max}}, \\[1ex]
A \cdot [\{\mathbf{w}_k\}, \mathbf{w}_t], & \sum_{k=1}^K \|\mathbf{w}_k\|^2 + \|\mathbf{w}_t\|^2 > P_{\text{max}},
\end{cases}
\end{equation}
where $A=\sqrt{\frac{P_{\max}}{\sum_{k=1}^K\|\mathbf{w}_k\|^2+\|\mathbf{w}_t\|^2}}$.
\subsection{Update $\mathbf{V}$}

Given $\mathbf{W}_{t+1}$, $\boldsymbol{\Theta}_t$, $\boldsymbol{\Phi}_t$, $\boldsymbol{\Omega}_1$, $\boldsymbol{\Omega}_2$, the optimal $\mathbf{V}$ can be achieved by means of
\begin{equation}
\label{eq:38}
\boldsymbol{V}_{t+1}^{\star}=\|\mathbf{W}_{t+1}\|e^{j\angle(\dot{\rho}\mathbf{W}_{t+1} + \boldsymbol{\Omega}_1)},
\end{equation}
where the above achieved $\boldsymbol{V}_{t+1}^{\star}$ is a closed-form optimal solution of $\boldsymbol{V}_{t}$.

\subsection{Update $\mathbf{\Theta}$}
Given the relevant variables, the optimization problem with respect to $\boldsymbol{\Theta}$ is transformed into an unconstrained problem. With the initial Lagrange multiplier $\dot{\beta}_{1,n}$, the Lagrangian function $L(\mathbf{W},\boldsymbol{\Theta},\dot{\beta}_{j,n},\dot{\gamma}_{j,n}), n\in \mathcal{N}=\{1,\ldots,N\}$, is constructed. For the $j$-th ALM iteration, we have\cite{9769997}
\begin{equation}
\label{eq:39}
\boldsymbol{\Theta}_{j+1}=\arg\min_{\boldsymbol{\Theta}}L(\mathbf{W},\boldsymbol{\Theta},\dot{\beta}_{j,n},\dot{\gamma}_{j,n}),
\end{equation}
\begin{equation}
\label{eq:40}
\dot{\beta}_{j+1,n}=\max\{\dot{\beta}_{j,n}+\kappa_{n,j}g_{n,j}(\boldsymbol{\Theta}_{t},\boldsymbol{\Theta}_{r}),0\},
\end{equation}
where $\kappa_{n,j}$ denotes the trainable step-size parameter for dual updates. Similar to problem (\ref{eq:33}), an unfolded DFP algorithm is also utilized to solve the non-convex unconstrained optimization problem (\ref{eq:39}). Specifically, the $r$-th iteration of the generalized quasi-Newton method can be formulated as follows:
\begin{equation}
\label{eq:41}
\mathbf{\Theta}_{j,r+1}=\mathbf{\Theta}_{j,r}-\dot{\zeta}_{j,r}\mathbf{D}_r\mathcal{F}(\mathbf{\Theta}_{j,r}),
\end{equation}
where $\mathcal{F}(\mathbf{\Theta}_{j,r})=\nabla_{\mathbf{\Theta}}L(\mathbf{W},\mathbf{\Theta}_{j,r},\dot{\beta}_{j},\dot{\gamma}_{j})$ is the trainable step-size parameter, and $\dot{\zeta}_{j,r}$ is trainable 
step-size parameter for DFP algorithm. In addition, $\mathbf{D}_{r}$ is a symmetric positive definite matrix computed in a manner similar to $\mathbf{B}_{r}$. 
% \textcolor{red}{$\mathbf{D}_{r}$ represents the approximate Hessian
% matrix obtained in the $r$-th iteration.}
To satisfy the reflection and transmission coefficient constraints, we project $\mathbf{\Theta}_{j,r}$ onto the feasible set:
\begin{equation}
\label{eq:42}
\boldsymbol{\Theta}_{t,j,r}(n)=\sqrt{\beta_t}\frac{\boldsymbol{\Theta}_{t,j,r}(n)}{|\boldsymbol{\Theta}_{t,j,r}(n)|},\forall n=1,\ldots,N,
\end{equation}
\begin{equation}
\label{eq:43}
\boldsymbol{\Theta}_{r,j,r}(n)=\sqrt{\beta_r}\frac{\boldsymbol{\Theta}_{r,j,r}(n)}{|\boldsymbol{\Theta}_{r,j,r}(n)|},\forall n=1,\ldots,N.
\end{equation}

\subsection{Update $\Phi$}
Given $\mathbf{W}_{t+1}$ and $\mathbf{V}_{t+1}$, the closed-form optimal solution for $\boldsymbol{\Phi}$ yields
\begin{equation}
\label{eq:44}
\boldsymbol{\varphi}_{t,t+1}^{\star}=\sqrt{\beta_t}e^{j\angle(\dot{\rho}\boldsymbol{\Theta}_{t,t}+\boldsymbol{\Omega}_2)},
\end{equation}
\begin{equation}
\label{eq:45}
\boldsymbol{\varphi}_{r,t+1}^{\star}=\sqrt{\beta_r}e^{j\angle(\dot{\rho}\boldsymbol{\Theta}_{r,t}+\boldsymbol{\Omega}_2)}.
\end{equation}

\subsection{Update ${\Omega}_1$ and ${\Omega}_2$}
After obtaining $\mathbf{W}_{t+1}$, $\mathbf{V}_{t+1}$, $\boldsymbol{\Theta}_{t+1}$, and $\boldsymbol{\Phi}_{t+1}$, we update the dual variables by means of
\begin{equation}
\label{eq:46}
\boldsymbol{\Omega}_1=\boldsymbol{\Omega}_1+\dot{\rho}(\mathbf{W}_{t+1}-\mathbf{V}_{t+1}),
\end{equation}
\begin{equation}
\label{eq:47}
\boldsymbol{\Omega}_2=\boldsymbol{\Omega}_2+\dot{\rho}(\boldsymbol{\Theta}_{t+1}-\boldsymbol{\Phi}_{t+1}).
\end{equation}

\subsection{Alternating Updates until Convergence}
After the $t$-th ADMM layer iteration, the objective function $f(\mathbf{W},\boldsymbol{\Theta})$ converges if and only if it satisfies the following convergence criterion:
\begin{equation}
\label{eq:48}
|\frac{f_{t+1}(\mathbf{W},\boldsymbol{\Theta})-f_{t}(\mathbf{W},\boldsymbol{\Theta})}{f_{t}(\mathbf{W},\boldsymbol{\Theta})}|\leq\varrho,
\end{equation}
with $\varrho$ denoting the algorithm's convergence threshold. If convergence is not satisfied, we proceed to the ($t$+1)-th ADMM iteration.

\subsection{Training Strategy}
The objective is to obtain the optimal dual step-size update parameters through neural network training. Define the training dataset $\mathcal{H}$ as follows:
\begin{equation}
\label{eq:49}
\mathcal{H}=\{\mathcal{H}_1,\mathcal{H}_2,\ldots,\mathcal{H}_B\},
\end{equation}
where $\mathcal{H}_{b}$ represents the system's CSI dataset for the $b$-th training batch, including channel information such as $\mathbf{H}_{br}$, $\dot{\mathbf{h}}_{rk}$, $\dot{\mathbf{g}}_{rt}$, under channel uncertainties. During the $t$-th ADMM iteration, we define the training parameter set $\mathcal{A}_{t}$, $t=1,\ldots,T$, as
\label{eq:50}
\begin{equation}
\mathcal{A}_t=\{\eta_j,\kappa_j,\tau_{j,r},\dot{\zeta}_{j,r}~|~j=1,\ldots,J,\;r=1,\ldots,R\},
\end{equation}
where $T$, $J$, and $R$
 denote the number of unfolded ADMM layers, unfolded ALM layers, and unfolded DFP layers, respectively, and $B$
 is the number of samples in the training dataset. For each CSI  set, \textit{forward propagation} generates the transmit waveform and STAR-RIS passive beamforming matrix according to the following mapping relationship:
 \begin{equation}
 \label{eq:51}
 \{\mathbf{W}^{}_b,\mathbf{\Theta}^{}_b\}=\mathcal{P}(\mathcal{H}_b,\mathbf{W}_0,\mathbf{\Theta}_0,\mathcal{A}).
 \end{equation}
 During DUN training \cite{10757550}, the loss function denoted by $\mathcal{L}(\mathcal{H}_b)$ is defined as 
  \begin{equation}
  \label{eq:52}
 \mathcal{L}(\mathcal{H}_b):=f(\mathcal{P}(\mathcal{H}_b,\mathbf{W}_0,\mathbf{\Theta}_0,\mathcal{A})).
 \end{equation}

 \textit{Remark 6:} 
 % Notice that the proposed model-based deep learning framework optimizes the simple scalar parameters instead of the trainable matrix parameters that shall be optimized in the existing deep unfolding-based deep networks 
Notice that the proposed model-based deep learning framework optimizes simple scalar parameters, in contrast to the trainable matrix parameters optimized in existing deep unfolding-based networks \cite{9363511,8999801,10117541}.
Thus, the proposed method requires only a small number of training samples, making it eminently suitable for challenging rapidly fluctuating wireless environments. 
In addition, the accurate global channel
information and prior information of STAR-RIS phase shifts are unknown in this paper for improving the feasibility for real-time implementation. 
On the other hand, it is noted that the proposed model-based deep learning framework is not limited to the point-like sensing target scenario. In other words, it is valid to the extended sensing target scenario, by substituting the $\mathrm{CRLB}(\dot{\bar{\mathbf{g}}})$ derived in Lemma \ref{lem:lem03} into the relaxed optimization problem (\ref{27}).
 
%  \subsection{Complex Number Operation Process}
%  %
%  \textcolor{red}{
%  In this paper, the deep unfolding framework is implemented using real-valued functions. Since relevant parameters in the system model, such as elements of the channel matrix and coefficient matrix, are in complex form, the following matrix transformation method is required to convert complex matrix $\mathbf{R}_{}$ into real-valued matrix $\mathbf{I}$. Namely, we have 
% \begin{equation}
% \mathbf{I}_{i,j}=\begin{bmatrix}\Re\{\mathbf{R}_{i,j}\}&-\Im\{\mathbf{R}_{i,j}\}\\\Im\{\mathbf{R}_{i,j}\}&\Re\{\mathbf{R}_{i,j}\}\end{bmatrix}.
%  \end{equation}}

\subsection{Algorithm Analysis}
Note that the proposed model-based deep learning framework for addressing the aforementioned stochastic
optimization problem (\ref{27}) is presented in Algorithm \ref{alg:Framwork2}, which involves joint dual-functional waveform, continuous STAR-RIS reflection, and transmission coefficient optimizations for i) target AoD estimation and ii) distance estimation tasks. In addition, the proposed model-based learning approach in principle is an unfolding of the iterative ADMM algorithm combined with deep learning technique, to reduce the complexity of operations such as large-scale iterations and matrix operations for $\mathbf{W}$ and $\boldsymbol{\Theta}$ updates, while improving the performance trade-off of spectral resource utilization between communication and sensing.

Note also that the computational complexity of the proposed deep learning framework for the STAR-RIS-assisted
ISAC system is incurred by three major operations performed in Algorithm \ref{alg:Framwork2}, i.e., i) the dual-functional waveform updating operation, ii) the STAR-RIS reflection coefficient updating operation, and iii) the deep learning-based ADMM unfolding operation consisting of the forward propagation and back propagation. Specifically, the complexity orders of updating the dual-functional waveform and the STAR-RIS reflection coefficients are
$\mathcal{O}(JR(KM)^{3})$ and $\mathcal{O}(JRN^{3})$, respectively. Besides, the corresponding complexity order of the deep learning-based ADMM unfolding is $\mathcal{O}(TN_{e}(KM+N))$
with $N_e$ being the number of neurons in each layer.
% (with respect to the forward propagation) and  $\mathcal{O}(N_{Par}(KM+N))$ (with respect to the back propagation). 
Therefore, the proposed deep learning framework for estimating target AoD involves an $\mathcal{O}(JR(KM)^{3} + JRN^{3} + TN_{e}(KM+N))$ computational complexity. On the other hand, the memory consumption is incurred by three major operations, i.e., i) the storage for  CSI dataset $\mathcal{H}$, ii) the storage for the dual-functional waveform, STAR-RIS reflection
coefficients, and iii) storage for the deep learning model parameters. We remark that the memory cost is dominated by deep learning model size. As the size of the model (number of layers and neurons) grows, the memory cost increases during the training phase. 
% DUN training adds one-time cost but greatly reduces run-time iterations. 
Our proposed framework unfolds the underlying iterative algorithm to a shallow and lightweight neural network, reducing the computational complexity significantly. Notably,
the average execution time per epoch of our approach will be further evaluated in the following Sec. \ref{sec:simulationire}.

In our system, the BS determines the STAR-RIS configuration and sends control commands via a network-owned controller. As the controlling entity, the BS is typically operated by the network operator and plays a central role in network-controlled modes. In addition, the frequency of phase-shift updates is governed not only by the coherence time but also by the update speed of the channel variation.

\alglanguage{pseudocode}
\begin{algorithm}[!t]
\caption{\small Proposed Model-Based Deep Learning Framework for
AoD Estimation}
\label{alg:Framwork2}
\begin{algorithmic}[1]
\small
\Require
Initialize the matrices $\mathbf{W}_{0}$ and  $\boldsymbol{\Theta}_{0}$, initialize the parameters $\dot{\beta}_{0}$ and $\dot{\gamma}_{0}$, and preset the training hyperparameter set $\mathcal{A}$.
\Ensure Optimized $\mathbf{W}^{\star}$ and  $\boldsymbol{\Theta}^{\star}$.
% \State \multiline{Generate the dataset $\mathcal{H}$ for DUN training and then randomly split $\mathcal{H}$ into $\mathcal{B}$ batches.}
\State {Generate  $\mathcal{H}$ for DUN training and then randomly split $\mathcal{H}$ into $\mathcal{B}$ batches.}
% \State \multiline{Decompose the beamforming matrix $\mathbf{V}$ into low-dimensional components $\mathbf{w}$, $\mathbf{u}$, and $\mu$. }
% $\vartriangleright$ 
\For{{$t$} = 1, $\ldots$, $T$}
\For{$j$ = 1, \ldots, $J$} $\;\vartriangleright$ \textbf{Forward Propagation Phase}
\State {Update the Lagrange multiplier $\dot{\gamma}_{t,j}$ by (\ref{eq:34}). 
 }
\For{$r = 1, \ldots, R$}
%
% \State \multiline{Update the dual-functional waveform $\mathbf{W}_{t,j,r}$  based on gradient-based method following the update criterion (\ref{eq:35}).}
\State {Update  $\mathbf{W}_{t,j,r}$ based on gradient-based method following the update criterion (\ref{eq:35}).}
 \State {Normalize $\mathbf{W}_{t,j,r}$ to satisfy the total power requirement via (\ref{eq:37}).}
 \State {Update $\boldsymbol{\Theta}_{r,j,r}$ and $\boldsymbol{\Theta}_{t,j,r}$ through the generalized quasi-Newton method via (\ref{eq:41}).}
 \State {Normalize $\boldsymbol{\Theta}_{t,j,r}$ and $\boldsymbol{\Theta}_{r,j,r}$ to satisfy the transmission and reflection coefficient constraints by (\ref{eq:42}) and (\ref{eq:43}), respectively.}
 \EndFor
 \EndFor
 \State {Update the auxiliary variable $\mathbf{V}_t$ by (\ref{eq:38}).}
\State {Update the auxiliary variable $\boldsymbol{\Phi}_t$ by (\ref{eq:44}) and (\ref{eq:45}).}
\State {Update the dual variables $\boldsymbol{\Omega}_{1,t}$ and $\boldsymbol{\Omega}_{2,t}$ according to (\ref{eq:46}) and (\ref{eq:47}), respectively.}
%\EndFor
\For{$j = 1, \ldots, J$} $\vartriangleright$ \textbf{Back Propagation Phase}
\State {Update the network parameters $\eta_{t,j}$ and $\kappa_{t,j}$ via the gradient descent method.}
\EndFor
\EndFor
\end{algorithmic}
\end{algorithm}
%-------------------------------->
% %-------------------------------->

%\subsection{Computational Complexity Analysis}
% 
% \textcolor{red}{It is noted that this matrix transformation can convert the complex element at the $i$-th row and $j$-th column of complex matrix $\mathbf{R}$ into a real matrix through the following operations. Let operation $\mathbf{G}(\mathbf{A})$
% denote the process of converting each complex element of the complex matrix $\mathbf{A}\in\mathbb{C}^{N\times M}$ into a real matrix one-to-one, resulting in the real matrix $\mathbf{~B}\in\mathbb{R}^{2N\times2M}$. The corresponding elements can then be extracted and recombined into the complex matrix form, e.g., 
% \begin{align}
% \mathbf{G}(\dot{\mathbf{h}}_{rk}^{H}\mathbf{\Theta}_r\mathbf{H}_{br})=\mathbf{G}(\dot{\mathbf{h}}_{rk}^{H})\mathbf{G}(\mathbf{\Theta}_r)\mathbf{G}(\mathbf{H}_{br}).
% \end{align}
% All complex matrix operations involved in the optimization process are converted into real matrix operations using this transformation.}

% \textcolor{red}{\textit{Remark 5:} Notice that the proposed model-based deep learning framework is not limited to the single sensing target scenario. An extension of the proposed STAR-RIS and NOMA-assisted ISAC transmission system to the case of multiple sensing targets will be our future research topic.}

\section{Simulation Results}
\label{sec:simulationire}
\subsection{Simulation Specification}
In this section, we present simulation results for the proposed STAR-RIS-assisted ISAC system under various BS and STAR-RIS configurations. The simulations validate the effectiveness of the proposed DL-ADMM framework. In the experiments, a simplified model with a single CU and a single SU is considered (although the proposed scheme can be extended to multi-target scenarios). The system model is set in a two-dimensional plane, with the coordinates of the BS, STAR-RIS, CU, and SU as (0, 0) m, (30, 0) m, (35, 7) m, and (25, -7) m, respectively. In addition, we set the wavelength $\lambda=10^{-2}$ m, $\sigma_k^2=\sigma_t^2=-80$ dBm. Unless otherwise specified, the number of STAR-RIS elements is set to be $N$ = 36, the number of BS antennas is set to be $M$ = 16, $L$ = 128, and imperfect CSI is used with a normalized channel mean MSE of $\delta = 0.1$. The communication-sensing trade-off coefficient is $\zeta = 0.5$. During algorithm training, the numbers of unfolded layers are set to be $T$ = 75, $J$ = 8, $R$ = 10 and the training dataset contains $B$ = 500 samples. 
In addition, we use PyTorch 2.1.1
with an NVIDIA RTX 3090 GPU to conduct the following experiments.

For demonstrating the performance benefits and the effectiveness of our proposed framework, we compare our proposed framework with following counterpart approaches:
\begin{itemize}
    \item  \textbf{ADMM-MM} \cite{9769997}: This approach proposes to solve the complicated non-convex optimization problem based on the ADMM and
 majorization-minimization (MM) methods. More concretely, it alternatively optimizes the transmit waveform, RIS reflecting coefficients,  and auxiliary variables until reaching a convergence;  
      \item \textbf{DL-ADMM with RIS (DL-ADMM (RIS))}: This scheme designs the dual-function waveform and RIS reflecting coefficients based on the proposed model-based deep learning framework;
    \item  \textbf{DL-ADMM without RIS (DL-ADMM (w/o RIS))}: This method only designs the dual-function waveform  based on our proposed framework;
    \item \textbf{DL-ADMM with OMA (DL-ADMM (OMA))}: This scheme deploys the STAR-RIS and implements the orthogonal multiple access to avoid co-carrier interference.
     \item \textbf{Unsupervised deep learning with RIS (Unsupervised learning (RIS))}: It jointly optimizes  transmit waveform and RIS reflecting coefficients by employing the objective function loss and constraint-based regular terms.
\end{itemize}

\begin{figure}
\setlength{\abovecaptionskip}{-0.2cm}
\setlength{\belowcaptionskip}{-0.5cm}
    \begin{center}    \includegraphics[width=9cm,height=6.5cm]{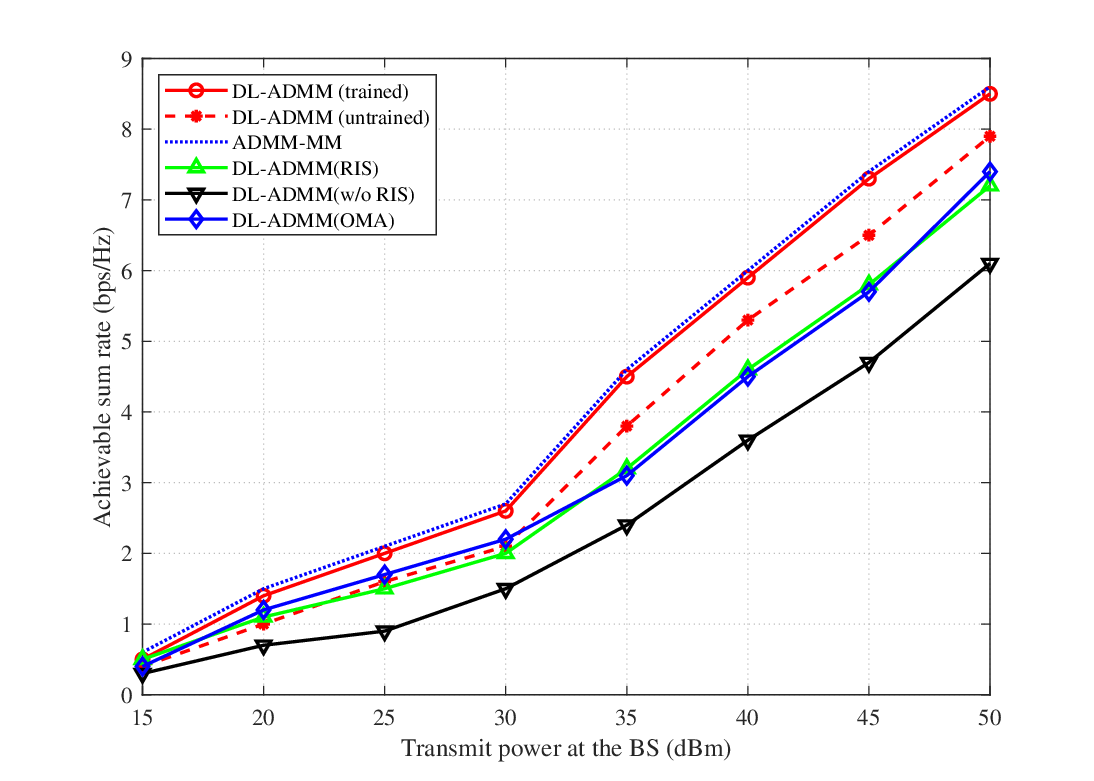}
    \end{center}
    \caption{Achievable sum-rate performance comparisons of distinct resource allocation approaches with different transmit power of BS.}
    \label{fig:002}
\end{figure}

\begin{figure}[tp]
\setlength{\abovecaptionskip}{-0.2cm}
\setlength{\belowcaptionskip}{-0.5cm}
    \begin{center}
    \includegraphics[width=9cm,height=6.5cm]{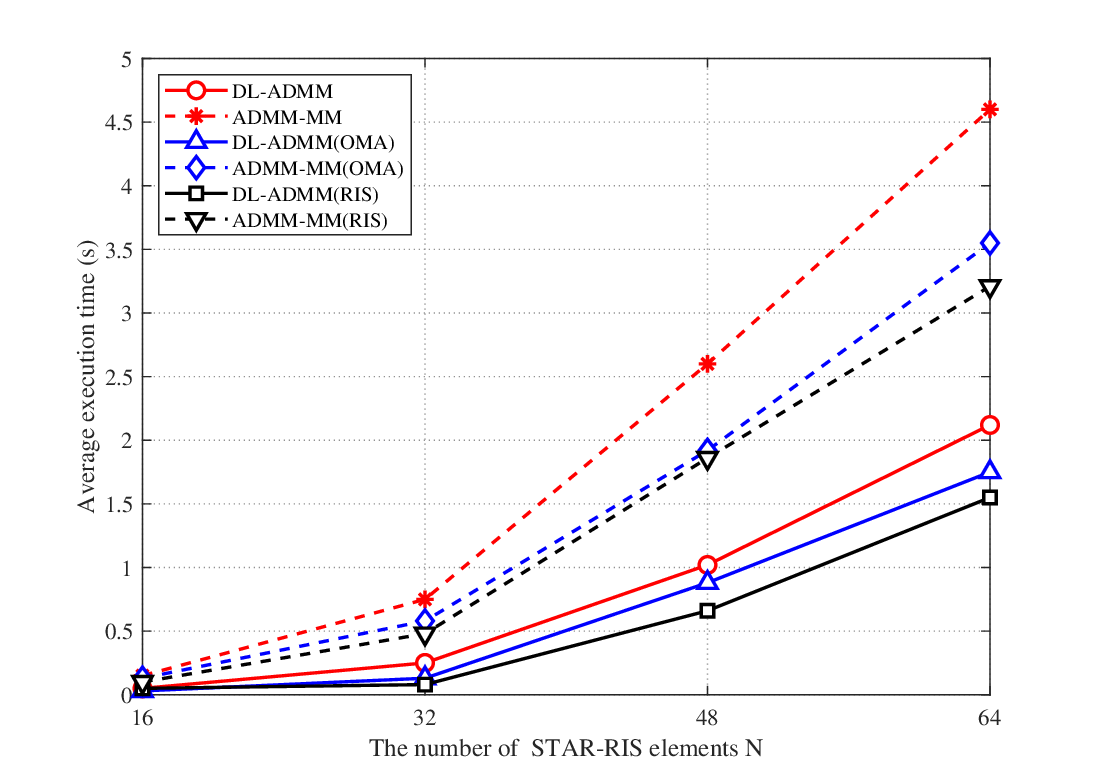}
    \end{center}
    \caption{Average execution time of distinct resource allocation algorithms under different STAR-RIS element configurations.} \label{fig:003}
\end{figure}
\begin{figure}[tp]
\setlength{\abovecaptionskip}{-0.2cm}
\setlength{\belowcaptionskip}{-0.5cm}
    \begin{center}
    \includegraphics[width=9cm,height=6.5cm]{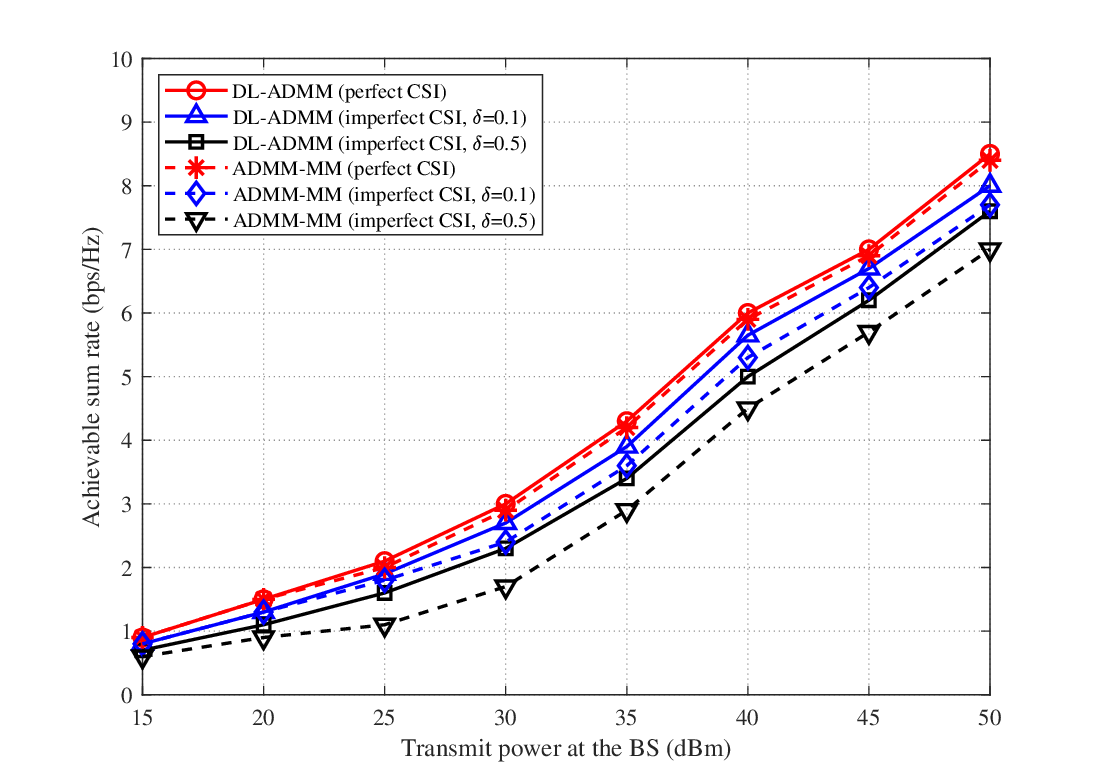}
    \end{center}
    \caption{ Achievable sum rate performance of the proposed ISAC system using different resource allocation approaches under  channel uncertainties.} \label{fig:04}
\end{figure}

\subsection{Achievable Sum-Rate Performance Comparisons}
Fig. \ref{fig:002} illustrates the relationship between the BS transmit power and the system achievable sum rate. The simulation results in Fig. \ref{fig:002} demonstrate that the DL-ADMM algorithm proposed in this paper exhibits superior performance in optimizing the STAR-RIS assisted ISAC system and that STAR-RIS can bring higher gains. The proposed algorithm is compared with the ADMM-MM algorithm proposed in \cite{9769997}, as well as with the DL-ADMM algorithm in scenarios with RIS assistance and without RIS assistance. The simulation results show that the proposed algorithm can achieve results comparable to those of the algorithm in  \cite{9769997} after training. The main reason is that the proposed DL-ADMM approach can gradually minimize the MUI energy and enhance the DoFs in waveform designs in ISAC systems, thus leading to improved communication performance.
Furthermore, under this system model, deploying STAR-RIS achieves a higher system achievable sum rate than both using RIS and without using RIS.

In Fig. \ref{fig:003}, we evaluate the algorithmic complexity by comparing the average execution time. It can be observed from this figure that under different STAR-RIS element numbers $N$ and across various system configurations, the proposed DL-ADMM algorithm consistently outperforms the ADMM-MM algorithm in terms of execution time. Specifically, our proposed approach achieves an average execution time that is approximately $40\%$ shorter than that of the ADMM-MM method.
% it is easy to see that the average execution time of DL-ADMM is approximately $40\%$ of that of ADMM-MM. 
Combining this result with Fig. \ref{fig:002}, it is evident that DL-ADMM can achieve performance comparable to the ADMM-MM algorithm while significantly reducing the execution time, thereby demonstrating its effectiveness in lowering computational complexity and accelerating system waveform and beamforming design.
%This is due to the benefits brought by OSD-aided scheme, essentially coming from the minimizing ${\rm Tr}[(\mathbf{H}_{\mathbf{w}} \mathbf{H}_{\mathbf{w}}^{H})^{-1}]$ by (\ref{equ:220105}).

Fig. \ref{fig:04} compares the impact of channel uncertainty on the performance of the DL-ADMM and ADMM-MM algorithms under different BS transmit powers. The simulation results show that under ideal CSI, the ADMM-MM algorithm achieves a slightly higher system sum rate than DL-ADMM. This phenomenon indicates that when CSI is perfectly accurate, the precise channel parameters utilized by ADMM-MM enable its resource allocation strategy to reach a more optimal theoretical equilibrium. However, under non-ideal CSI conditions, as the channel estimation error rate increases, the advantage of DL-ADMM becomes increasingly evident. DL-ADMM actually outperforms ADMM-MM in terms of system sum rate under the same channel estimation error rate. This is because DL-ADMM accounts for channel uncertainty during the training phase, effectively mitigating the negative impact of channel uncertainty on resource allocation decisions. These results demonstrate that DL-ADMM exhibits superior robustness to channel uncertainty compared to ADMM-MM, validating its robustness in practical scenarios.

\begin{figure}[tp]
\setlength{\abovecaptionskip}{-0.2cm}
\setlength{\belowcaptionskip}{-0.5cm}
\begin{center}    
\includegraphics[width=9cm,height=6.5cm]{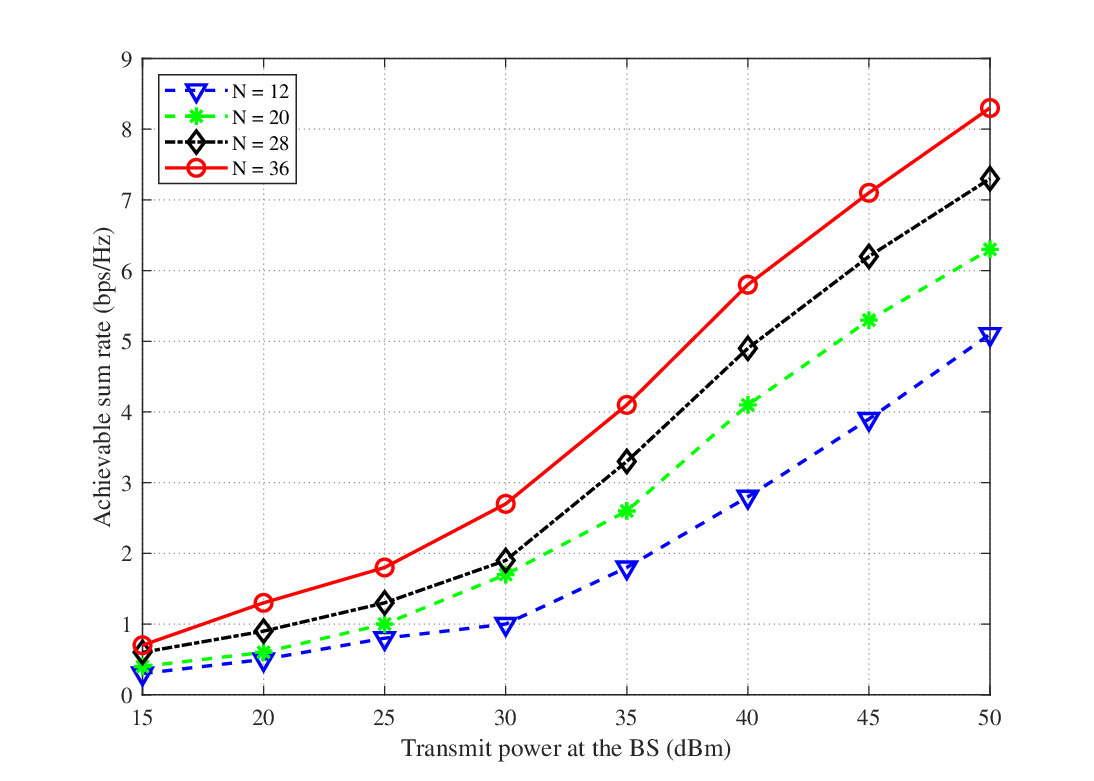}    
\end{center}    
\caption{ Achievable sum rate performance of the proposed ISAC system using DL-ADMM
with different STAR-RIS element configuration $N$.} \label{fig:05}
\end{figure}

\begin{figure}[tp]
\setlength{\abovecaptionskip}{-0.2cm}
\setlength{\belowcaptionskip}{-0.5cm}    
\begin{center}    
\includegraphics[width=9cm,height=6.5cm]{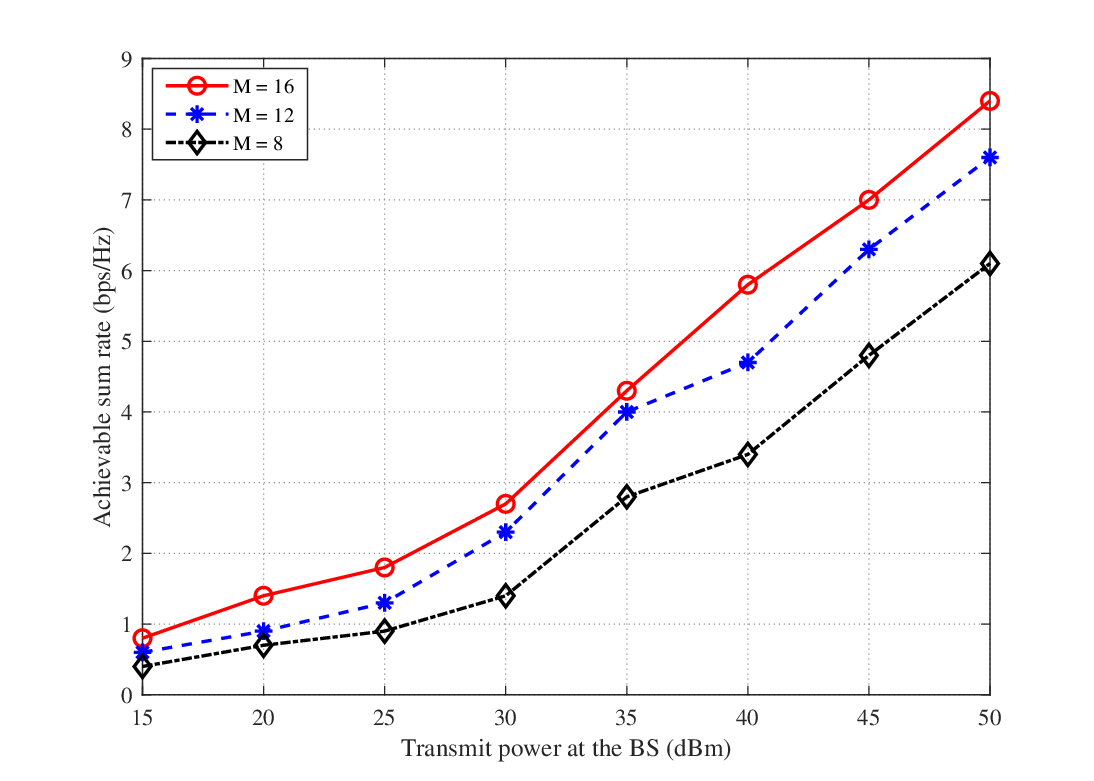}    
\end{center}    
\caption{ Achievable sum rate performance of the proposed ISAC system using DL-ADMM
with different BS antenna configuration $M$.} \label{fig:06}
\end{figure}
\begin{figure}[tp]
\setlength{\abovecaptionskip}{-0.3cm}
\setlength{\belowcaptionskip}{-0.6cm}    
\begin{center}    
\includegraphics[width=9cm,height=6.5cm]{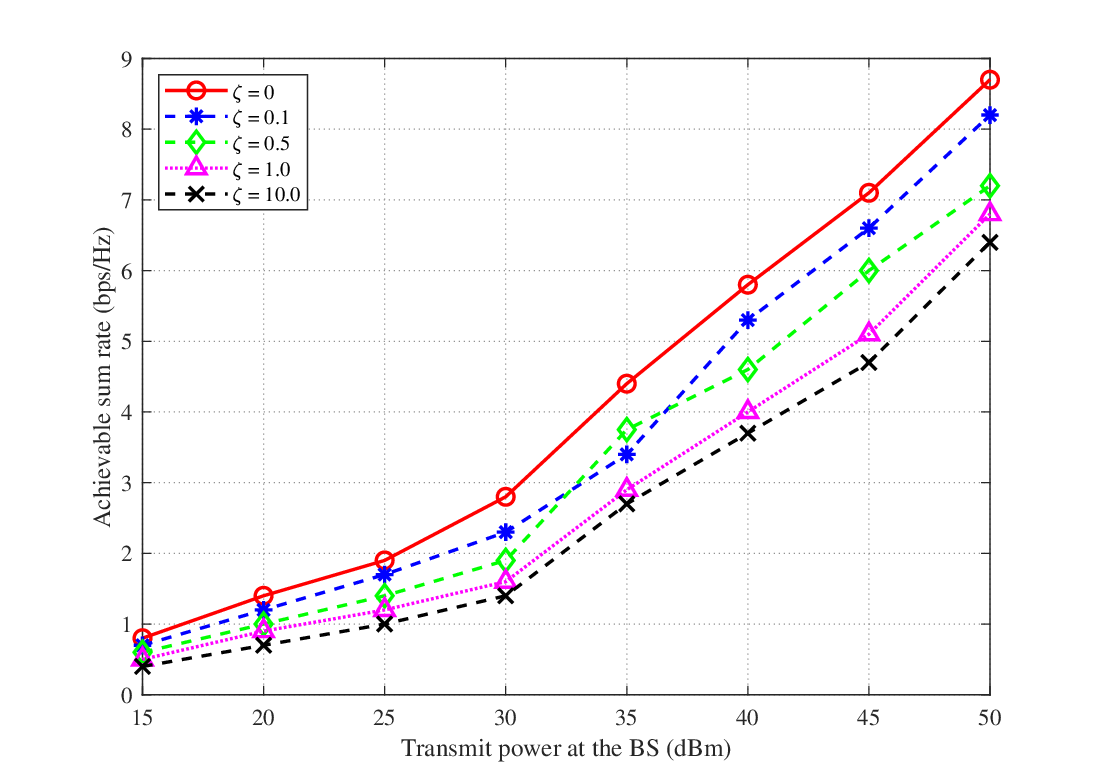}    
\end{center}    
\caption{  Achievable sum rate performance of the proposed ISAC system using DL-ADMM
with different communication-sensing trade-off coefficients.} \label{fig:07}
\end{figure}

In Fig. \ref{fig:05}, we depict the system sum rate performance of the proposed ISAC system using STAR-RIS with different numbers of passive elements $N$. It can be observed from Fig. \ref{fig:05} that a larger number of STAR-RIS elements enables a higher system sum rate. This is because increasing the number of elements allows for more precise phase adjustment and amplitude control of the signals received from the BS, achieving more accurate beamforming and improving the system rate. However, as shown in Fig. \ref{fig:003}, we found out that increasing the number of STAR-RIS elements will prolong the algorithm execution time and increase computational complexity.  

Similarly, Fig. \ref{fig:06} illustrates the achievable system sum rate of the proposed STAR-RIS-assisted ISAC system with different numbers of BS antennas $M$. It can be seen that using more antennas $M$ leads to a higher system sum rate. This is because increasing the number of antennas enhances the system's spatial DoF and enables more precise adjustment of the phase and amplitude of transmitted signals, allowing the signals to be more concentrated towards target users and thus improving the system sum rate.

Fig. \ref{fig:07} compares the achievable system sum rate under different communication-perception trade-off coefficients 
$\zeta$. It can be observed from Fig. \ref{fig:07} that the smaller the trade-off coefficient $\zeta$, the higher the system achievable sum rate. When the trade-off coefficient $\zeta$ is smaller, the optimization places more focus on communication performance metrics.
% \footnote{Note that the Pareto boundary is an important metric for ISAC to identify the optimal balance between communication and sensing performance\cite{10540149}. In this sense, the MMSE Pareto boundary and the CRB Pareto boundary, as well as the waveform design to achieve Pareto optimality in ISAC will be studied in our future research.} 
Notably, the MUI energy is negatively correlated with the signal rate, i.e., reducing MUI energy increases the system sum rate. In addition, as the trade-off coefficient $\zeta$ increases from 1.0 to 10.0, the corresponding performance gap of the two systems in terms of the achievable sum rate diminishes, due to the mutual constraint between communication and perception performance, tending towards a critical value.

\subsection{AoD Estimation Performance Comparisons}
In this subsection, we evaluate the CRLB performance for target AoD estimation task. To demonstrate the AoD estimation performance of the proposed model-based deep learning approach as well as the performance gain brought by STAR-RIS, we compare our approach with the unsupervised learning-based scheme, which exhibits low complexity.   

\begin{figure}[t]
\setlength{\abovecaptionskip}{-0.1cm}
\setlength{\belowcaptionskip}{-0.5cm}    
\begin{center}    
\includegraphics[width=9cm,height=6.5cm]{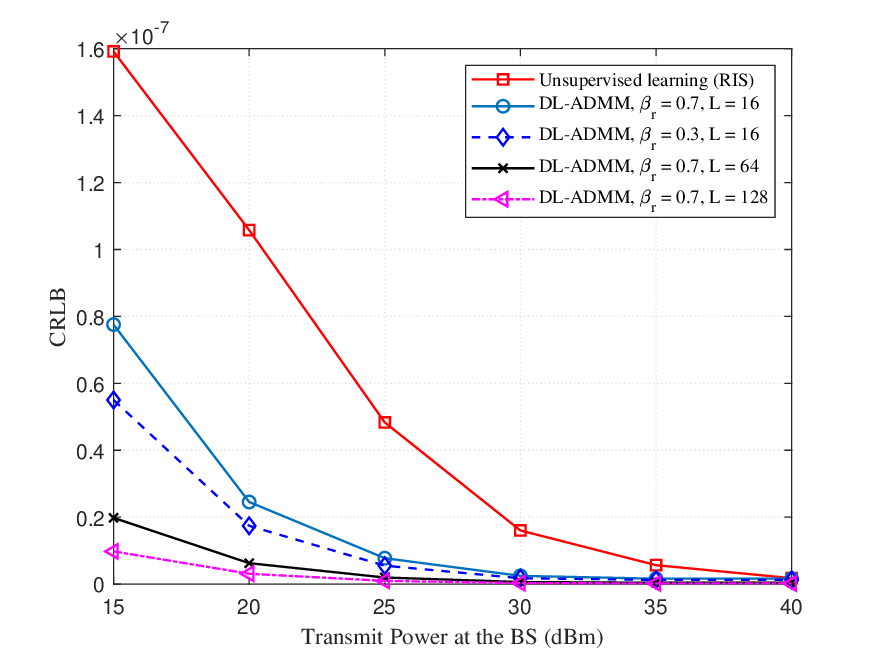}    
\end{center}    
\caption{  AoD estimation performance using distinct resource allocation approaches with $N$ = 36, $\beta_r$ = $\{0.3, 0.7\}$, and $L$ = \{16, 64, 128\}.} \label{fig:09}
\end{figure}

Fig. \ref{fig:09} depicts the CRLB performance of the proposed ISAC system impacted by the transmit power, employing different resource allocation algorithms. In general, the CRLB for estimating AoD decreases when increasing the transmit power. This is consistent with the analytical results given in Lemma \ref{lem02}. This figure also reveals that the proposed approach can achieve 
higher AoD estimation accuracy relative to the unsupervised learning-based approach.
% superior CRLB performance than both the unsupervised learning-based approach and the DL-ADMM with random phase-shifting control. Implicitly, this indicates that our approach enables higher AoD estimation accuracy relative to above baseline methods.
One of the main reasons is that the estimation accuracy is highly sensitive to channel errors and path loss, leading to significant degradation as the CSI error variance increases.
However, our model-based learning framework demonstrates inherent robustness to channel uncertainties by employing the gradient-based training approach that leverages imperfect CSI data during the model training phase.
% our model-based learning framework exhibits inherent robustness against channel uncertainties, as we apply the gradient-based method to train the DUN model using imperfect CSI data during the model training stage. 
% thus rendering our algorithm’s inherent robustness against channel uncertainties. 
Another reason is that our approach optimizes PS factor $\beta_r$ and  
phase-shifting coefficients $\boldsymbol{\Theta}$ of STAR-RIS, thus providing additional DoFs and improving the  estimation accuracy. 
% Besides, this figure indicates that decreasing the PS factor $\beta_r$ improves the AoD estimation performance.
% This demonstrates the effectiveness  of the proposed approach in terms of improving the performance trade-off between communication and sensing at the level of resource allocation in  ISAC, particular with a low PS factor $\beta_r$. 
As expected, this figure indicates that increasing the amount of measured digital samples (i.e., time-slots $L$) sharply improves the angular estimation accuracy from the measurement. The major reason is that increasing $L$ can increase the rank of the signal covariance matrix $\mathbf{R}_{{\Lambda}}$, thus lowering the trace of $\mathbf{R}_{{\Lambda}}^{-1}$, which in principle represents the amount of information of the signal covariance matrix, and yielding lower CRLB.

Fig. \ref{fig:10} illustrates the CRLB performance of the proposed ISAC system impacted by the PS factor $\beta_r$ and the communication-sensing trade-off coefficient $\zeta$. 
% It is easy to see from this figure that the AoD estimation performance is sensitive to the PS factor $\beta_r$. With given transmit power, decreasing the value of $\beta_r$ from 0.7 to 0.3 leading to significant estimation performance improvement. 
It is easy to see from this figure that increasing the value of trade-off coefficient $\zeta$ improves the estimation performance, especially for a high $\zeta$ scenario. The main reasons are that: i) a high $\zeta$ can turn a marginally stable unfolded network into an exploding gradient network, especially when encountering a very deep unfolded network with many nested layers; ii)  high $\zeta$ can
bring more focus on the sensing performance metric during the optimization process.
% ; ii) Increasing  $\zeta$ can  increase the effective Lipschitz constant of the gradient flow when using bounded step sizes. 
% However, this will cause unstable forward iterates or exploding backpropagated gradients unless step sizes are correspondingly bounded.

\begin{figure}[t]
\setlength{\abovecaptionskip}{-0.1cm}
\setlength{\belowcaptionskip}{-0.5cm}    
\begin{center}    
\includegraphics[width=9cm,height=6.5cm]{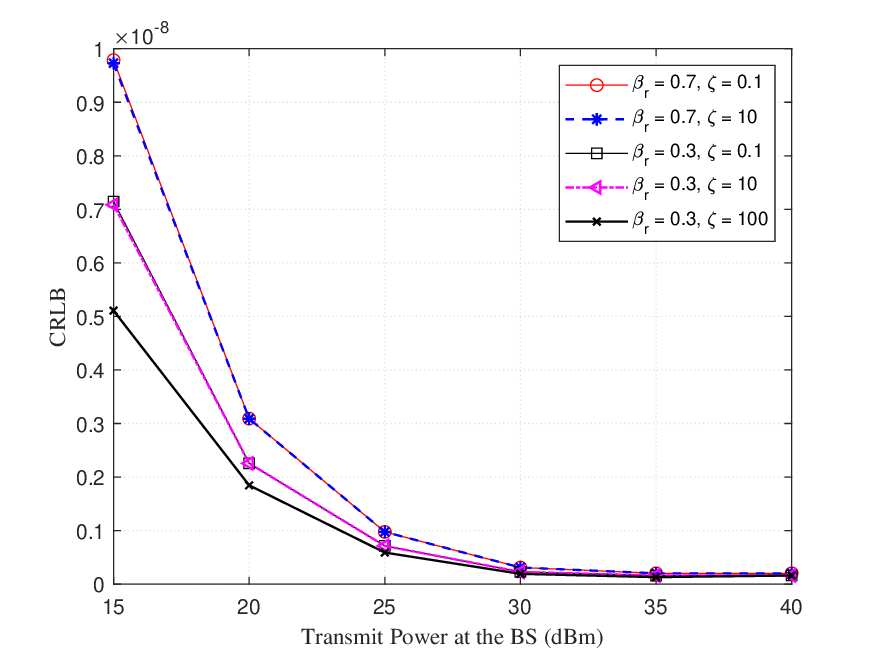}    
\end{center}    
\caption{  AoD estimation performance of the proposed ISAC system using DL-ADMM
with different combinations of the PS factor $\beta_r$ and the communication-sensing trade-off coefficient $\zeta$.} \label{fig:10}
\end{figure}

\section{Conclusion and Future Research Directions}
\label{sec:Con}
This paper investigated a STAR-RIS-assisted ISAC system, focusing on the joint optimization of BS transmit beamforming vectors and STAR-RIS coefficient matrices. An optimization problem model was constructed to minimize the weighted sum of communication and sensing performance metrics. To reduce computational complexity, a lightweight deep learning framework named DL-ADMM was proposed. Simulation results demonstrated that DL-ADMM achieves performance comparable to the ADMM-MM algorithm while significantly reducing execution time, enabling rapid beamforming optimization.
Considering the challenge of acquiring accurate CSI in practical applications, this study introduces additive error interference to simulate imperfect CSI. Experimental results verified that the proposed algorithm exhibits robust performance under channel uncertainties.

Based on this work, there are still some potential research directions that deserve further exploration.
\begin{itemize}
%     \item An extension of our proposed STAR-RIS-assisted ISAC frame-
% work adopting other communication performance metrics, such as spectral
% efficiency [9], energy efficiency and practical QoS, or user fairness [1], [36],
% will be our future research topic.
%
\item 
% Note that the potential deployment scenes of the STAR-RIS controller
% include the ultra-high dense urban environment, remote or coverage blind
% areas. These scenarios have a high demand for low-latency 6G ISAC commu-
% nication. 
In our next step research,  we will investigate task-oriented
resource allocation for ISAC systems assisted by STAR-RIS, with a focus
on vehicular network applications under stringent latency constraints for
autonomous driving.
\item  Note that the Pareto boundary is an important metric for ISAC to identify the optimal balance between communication and sensing performance\cite{10540149}. In this sense, the MMSE Pareto boundary and the CRB Pareto boundary, as well as the waveform design to achieve Pareto optimality in ISAC will be studied in our future research.
\item  
% Note that the current trainable step sizes ensure positivity, but not Lipschitz stability. 
For a very deep unfolded network with many nested layers, using bounded step sizes and objective normalization may ensure the stability of the gradient during back propagation. 
In addition, the communication-sensing weight can increase the effective Lipschitz constant of the gradient flow when the step sizes are correspondingly bounded. 
We will explore the impacts of the trainable step sizes on the Lipschitz stability of our framework and improve the proposed framework for very large deep unfolded network scenario in our future study.
\end{itemize}

\begin{appendices}
%----------------------> Lemma 1
\section{Proof of Lemma 1}
\label{sec:Proof1}
Recall that the unknown channel parameters to be estimated are those contained in $\dot{\mathbf{g}}_{rt}$, i.e., the 2-tuple vector $\boldsymbol{\nu}:=[\theta_2,d_2]^{T}$. Since the additive noise is complex Gaussian distributed, by introducing noise-free $\boldsymbol{\mu}(\boldsymbol{\nu}):=[\dot{\mathbf{g}}_{rt}^H\mathbf{\Theta}_t\mathbf{H}_{br}\mathbf{w}_1s_{1,1}+\dot{\mathbf{g}}_{rt}^H\mathbf{\Theta}_t\mathbf{H}_{br}\mathbf{w}_ts_{t,1},\ldots,\dot{\mathbf{g}}_{rt}^H\mathbf{\Theta}_t\mathbf{H}_{br}\mathbf{w}_1s_{1,L}+\dot{\mathbf{g}}_{rt}^H\mathbf{\Theta}_t\mathbf{H}_{br}\mathbf{w}_ts_{t,L}]^{T} = (\dot{\mathbf{g}}_{rt}^H\boldsymbol{\Theta}_t\mathbf{H}_{br}\ddot{\mathbf{W}}\dot{\mathbf{S}})^{T}$, the FIM for $\boldsymbol{\nu}$ is given by
$
\mathbf{F}_{ij}:=\frac {2}{\sigma_t^2}\Re\{\frac{\partial\boldsymbol{\mu}^{\mathrm{H}}}{\partial\boldsymbol{\nu}_i}\frac{\partial\boldsymbol{\mu}}{\partial\boldsymbol{\nu}_j}\},i,j=1,2.$ Herein, $\ddot{\mathbf{W}} = \left[\mathbf{w}_1,\mathbf{w}_t\right]$,
$\dot{\mathbf{S}} = \left[\dot{\mathbf{s}}_{_1},\ldots, \dot{\mathbf{s}}_{_{L}}\right]$, and $\dot{\mathbf{s}}_{_l}=[s_{1,l}, s_{t,l}]^{T}$.
As such, the FIM matrix ${\mathbf{F}}$ can be written as
\begin{equation}
\label{eq:0060}
\small
\mathbf{F}=\begin{bmatrix}{F}_{\theta\theta}&{F}_{\theta d}\\{F}_{\theta d}&{{F}}_{dd}\end{bmatrix},\end{equation}
with ${F}_{\theta\theta}$, ${{F}}_{\theta d}$, ${{F}}_{dd}$ being
\begin{align}
\label{eq:0061}
\small
\begin{split}
{F}_{\theta\theta}&=\frac{2}{\sigma_t^{2}}\Re\{\big(\big(\dot{\mathbf{h}} \ddot{\mathbf{W}}
\dot{\mathbf{S}}\big)^{T}\big)^{H}  \cdot  \big(\dot{\mathbf{h}} \ddot{\mathbf{W}}
\dot{\mathbf{S}}\big)^{T}\}\\
&=\frac{2}{\sigma_t^{2}} \mathrm{Tr}\Big(\dot{\mathbf{S}}^{T} \ddot{\mathbf{W}}^{T} \dot{\mathbf{h}}^{T} \cdot \dot{\mathbf{h}}^{*} \ddot{\mathbf{W}}^{*}
\dot{\mathbf{S}}^{*}\Big)\\
&=\frac{2}{\sigma_t^{2}} \mathrm{Tr}\Big(\mathbf{R}_x[\mathbf{w}_1,\mathbf{w}_t][\mathbf{w}_1,\mathbf{w}_t]^{H}\Big),
\end{split}
\end{align}
\begin{align}
\label{eq:0062}
\small
\begin{split}
{{F}}_{\theta d} &=\frac{2}{\sigma_t^{2}}\Re\{((\dot{\mathbf{h}} \dot{\mathbf{W}}
\mathbf{S})^{T})^{H} \cdot (\mathbf{h} \dot{\mathbf{W}}
\mathbf{S}D)^{T}\}\\
& = -\frac{2}{\sigma_t^{2}d_{rt}} \mathrm{Tr}\Big(\dot{\mathbf{S}}^{T} \ddot{\mathbf{W}}^{T} {\mathbf{h}}^{T} \cdot \dot{\mathbf{h}}^{*} \ddot{\mathbf{W}}^{*}\dot{
\mathbf{S}}^{*}\Big)\\
& = -\frac{2}{\sigma_t^{2}d_{rt}} \mathrm{Tr}\Big(\mathbf{R}_z[\mathbf{w}_1,\mathbf{w}_t][\mathbf{w}_1,\mathbf{w}_t]^{H}\Big),
\end{split}
\end{align}
\begin{align}
\label{eq:0063}
\small
\begin{split}
{{F}}_{dd}=&\frac{2}{\sigma_t^{2}}\Re\{((\mathbf{h} \ddot{\mathbf{W}}
\dot{\mathbf{S}}D)^{T})^{H}  \cdot  (\mathbf{h} \ddot{\mathbf{W}}
\dot{\mathbf{S}}D)^{T}\}\\
=&\frac{2}{\sigma_t^{2}d_{rt}^2}\Big(1+\frac{4\pi^2d_{rt}^2}{\lambda^2}\Big) \mathrm{Tr}\Big(\dot{\mathbf{S}}^{T} \ddot{\mathbf{W}}^{T} \mathbf{h}^{T} \cdot \mathbf{h}^{*} \ddot{\mathbf{W}}^{*}
\dot{\mathbf{S}}^{*}\Big)\\
=&\frac{2}{\sigma_t^{2}d_{rt}^2}\Big(1+\frac{4\pi^2d_{rt}^2}{\lambda^2}\Big) \mathrm{Tr}\Big(\mathbf{R}_y [\mathbf{w}_1,\mathbf{w}_t][\mathbf{w}_1,\mathbf{w}_t]^{H} \Big),
\end{split}
\end{align}
in which $\dot{\mathbf{h}}=\dot{\mathbf{g}}_{rt}^{H}\mathbf{\Theta}_{t}\mathbf{D}_{rt}^{H}\mathbf{H}_{br} \in\mathbb{C}^{1\times M}$, and $\mathbf{h} = \dot{\mathbf{g}}_{rt}^{H}\mathbf{\Theta}_{t}\mathbf{H}_{br} \in\mathbb{C}^{1\times M}$. 
% In addition,
% $D = \frac{-\frac{j2\pi d_{rt}}{\lambda}-1}{d_{rt}}$, $\mathbf{S}_1 = \text{diag}(\mathbf{1}\otimes {\mathbf{s}})$, $\mathbf{s}=[s_1,\ldots,s_K,s_t]^{T}$, and $\boldsymbol{\omega}_1 = \text{diag}(\mathbf{w}_1,\ldots,\mathbf{w}_K,\mathbf{w}_t)$. Besides, $\mathbf{1}$ corresponds to the ($K$+1)-element all-one vector.

Next, we derive the CRLB for estimating the target AoD $\theta_{rt}$, which corresponds to the first diagonal element of ${\mathbf{F}}^{-1}$:
\begin{align}
\label{eq:00065}
\small
{\mathrm{CRLB}}(\theta_{})=[{\mathbf{F}}^{-1}]_{1,1}=({{F}}_{\theta\theta}-{{F}}_{\theta d}{{F}}^{-1}_{dd}{{F}}_{\theta d})^{-1}.
% &= \frac{\sigma_{t}^2}{2}\left(\textrm{Tr}(\mathbf{R}_x \boldsymbol{\Omega}) 
% - \frac{|\textrm{Tr}(\mathbf{R}_{y} \boldsymbol{\Omega})\Re\{D\}|^{2}}{\textrm{Tr}(\mathbf{R}_{z} \boldsymbol{\Omega})D\cdot D^{*}}
% \right)^{-1},
\end{align}
Then, substituting the Eqs. (\ref{eq:0061}), (\ref{eq:0062}), and (\ref{eq:0063}) into the above (\ref{eq:00065}) yields 
% \begin{align}
% \small
% \begin{split}
% &{\mathrm{CRLB}}(\theta_{rt})=
% % [\underline{\mathbf{F}}^{-1}]_{1,1}=[\underline{\mathbf{F}}_{\theta\theta}-\underline{\mathbf{F}}_{\theta d}\underline{\mathbf{F}}^{-1}_{dd}\underline{\mathbf{F}}_{\theta d}^T]^{-1}\\
% \frac{\sigma_{t}^2}{2}\left(\textrm{Tr}(\mathbf{R}_x \boldsymbol{\Omega}) 
% - \frac{|\textrm{Tr}(\mathbf{R}_{y} \boldsymbol{\Omega})\Re\{D\}|^{2}}{\textrm{Tr}(\mathbf{R}_{z} \boldsymbol{\Omega})D\cdot D^{*}}
% \right)^{-1}\\
% & =\frac{\sigma_t^2}{2}\frac{1}{\mathrm{Tr}((\dot{\mathbf{H}}\mathbf{W})^H(\dot{\mathbf{H}}\mathbf{W}))-\frac{(\mathrm{Tr}((\dot{\mathbf{H}}\mathbf{W})^H(\mathbf{H}\mathbf{W}D)))^2}{\mathrm{Tr}((\mathbf{H}\mathbf{W}D)^H(\mathbf{H}\mathbf{W}D))}}.
% \end{split}
% \end{align}
% \begin{align}
% \begin{split}
% \small
% &{\mathrm{CRLB}}(\theta_{rt})=\\&\frac{\frac{\sigma_t^{2}}{2}}{\mathrm{Tr}((\dot{\mathbf{H}}\boldsymbol{\omega}\mathbf{S})^{H}(\dot{\mathbf{H}}\boldsymbol{\omega}\mathbf{S}))-\frac{\lambda^2|\mathrm{Tr}((\dot{\mathbf{H}}\boldsymbol{\omega}\mathbf{S})^{H}(\mathbf{H}\boldsymbol{\omega}\mathbf{S}))|^{2}}{(\lambda^2+4\pi^2d_{rt}^2)\mathrm{Tr}((\mathbf{H}\boldsymbol{\omega}\mathbf{S})^{H}(\mathbf{H}\boldsymbol{\omega}\mathbf{S}))}}.
% \end{split}
% \end{align}
\begin{align}
\begin{split}
\small
&{\mathrm{CRLB}}(\theta_{})=\\&\frac{\frac{\sigma_t^{2}}{2}}{\mathrm{Tr}((\dot{\mathbf{h}}\ddot{\mathbf{W}}\dot{\mathbf{S}})^{T} (\dot{\mathbf{h}}^{} \ddot{\mathbf{W}}^{}
\dot{\mathbf{S}}^{})^{*})-\frac{\lambda^2|\mathrm{Tr}(({\mathbf{h}}\ddot{\mathbf{W}}\dot{\mathbf{S}})^{T} (\dot{\mathbf{h}}^{} \ddot{\mathbf{W}}^{}
\dot{\mathbf{S}}^{})^{*})|^{2}}{(\lambda^2+4\pi^2d_{rt}^2)\mathrm{Tr}(({\mathbf{h}}\ddot{\mathbf{W}}\dot{\mathbf{S}})^{T} ({\mathbf{h}}^{} \ddot{\mathbf{W}}^{}
\dot{\mathbf{S}}^{})^{*})}},
\end{split}
\end{align}
which completes the proof.
%\end{appendices}

%\begin{appendices}
%----------------------> Lemma 2
\section{Proof of Lemma 2} 
\label{sec:proof2}
When taking the extended target case into accounts, the corresponding unknown channel parameters to
be estimated can be denoted by 
\begin{equation}
\label{eq:00103}
    \dot{\mathbf{G}}_{rt} = \sum_{i=1}^{N_s}
    \frac{e^{-j2\pi d_{i}/\lambda}}{\sqrt{\rho_{i}}} \boldsymbol{\alpha}_{2}(\theta_{i})\boldsymbol{\alpha}_3^{H}(\theta_i) +e_{rt},
\end{equation}
where $N_s$ is the number of scatterers. $d_i$ represents the distance between the STAR-RIS and the $i$th scatterer of the target. In addition, $\rho_i$ is the path loss, i.e., $\rho_i = d_i^2$. Besides, 
$\boldsymbol{\alpha}_3(\theta_i) \in \mathbb{C}^{T\times 1}$ accounts for the array response vector of the receive antennas. 

Since the additive noise follows complex Gaussian distribution, we next introduce the noise-free sensing signal 
% \begin{align}
% \small
% \label{eq:0069}
%     \begin{aligned}
% \boldsymbol{\mu}(\dot{\mathbf{G}}_{rt})  =&\Big[\sum_{k=1}^{K}\dot{\mathbf{G}}^{H}_{rt}\boldsymbol{\Theta}_{t}\mathbf{H}_{br}\mathbf{w}_{k}s_{k,1}+\dot{\mathbf{G}}^{H}_{rt}\boldsymbol{\Theta}_{t}\mathbf{H}_{br}\mathbf{w}_{t}s_{t,1}, \ldots, \\&\sum_{k=1}^{K}\dot{\mathbf{G}}^{H}_{rt}\boldsymbol{\Theta}_{t}\mathbf{H}_{br}\mathbf{w}_{k}s_{k,L}+\dot{\mathbf{G}}^{H}_{rt}\boldsymbol{\Theta}_{t}\mathbf{H}_{br}\mathbf{w}_{t}s_{t,L}\Big]^{T} \\
% =&\Big(\dot{\mathbf{G}}^{H}_{rt}\boldsymbol{\Theta}_{t}\mathbf{H}_{br}\dot{\mathbf{W}}\mathbf{S}\Big)^{T},
%  % =\sum_{k=1}^{K}\dot{\mathbf{G}}^{H}_{rt}\boldsymbol{\upsilon}_{k}+\dot{\mathbf{G}}^{H}_{rt}\boldsymbol{\upsilon}_t,
% \end{aligned}
% \end{align}
\begin{align}
\small
\label{eq:0069}
    \begin{aligned}
\boldsymbol{\mu}(\dot{\mathbf{G}}_{rt})  =&\Big[\sum_{k=1}^{K}\dot{\mathbf{G}}^{H}_{rt}\boldsymbol{\Theta}_{t}\mathbf{H}_{br}\mathbf{w}_{k}s_{k,1}+\dot{\mathbf{G}}^{H}_{rt}\boldsymbol{\Theta}_{t}\mathbf{H}_{br}\mathbf{w}_{t}s_{t,1}, \ldots, \\&\sum_{k=1}^{K}\dot{\mathbf{G}}^{H}_{rt}\boldsymbol{\Theta}_{t}\mathbf{H}_{br}\mathbf{w}_{k}s_{k,L}+\dot{\mathbf{G}}^{H}_{rt}\boldsymbol{\Theta}_{t}\mathbf{H}_{br}\mathbf{w}_{t}s_{t,L}\Big]^{T} \\
=&\Big(\dot{\mathbf{G}}^{H}_{rt}\boldsymbol{\Lambda}\Big)^{T},
 % =\sum_{k=1}^{K}\dot{\mathbf{G}}^{H}_{rt}\boldsymbol{\upsilon}_{k}+\dot{\mathbf{G}}^{H}_{rt}\boldsymbol{\upsilon}_t,
\end{aligned}
\end{align}
where $\boldsymbol{\Lambda}_{}=\boldsymbol{\Theta}_{t}\mathbf{H}_{br}\dot{\mathbf{W}}\mathbf{S}$. In this case, (\ref{eq:0069}) is a linear noise-free model in $\dot{\mathbf{G}}_{rt}$. By vectorizing the $\dot{\mathbf{G}}^{H}_{rt}\boldsymbol{\Lambda}$ item in (\ref{eq:0069}), we have
% \begin{align}
% \begin{split}
% \mathrm{vec}(\dot{\mathbf{G}}^{H}_{rt}\boldsymbol{\upsilon}_{k})&=(\boldsymbol{\upsilon}_{k}^{T}\otimes \mathbf{I}_{L})\mathrm{vec}(\dot{\mathbf{G}}^{H}_{rt}) \\&=(s_k\mathbf{w}_{k}^{T}\mathbf{H}_{br}^{T}\boldsymbol{\Theta}_{t}^{T}\otimes \mathbf{I}_{L})\mathrm{vec}(\dot{\mathbf{G}}^{H}_{rt}),
% \end{split}
% \end{align}
\begin{align}
\label{eq:067}
\begin{split}
\mathrm{vec}(\dot{\mathbf{G}}^{H}_{rt}\boldsymbol{\Lambda})&\overset{(\text{I})}{=} (\mathbf{I}_{T} \otimes \boldsymbol{\Lambda}_{}^{T})\mathrm{vec}((\dot{\mathbf{G}}^{H}_{rt})^T) \\
& \overset{(\text{II})}{=}  (\mathbf{I}_{T} \otimes \boldsymbol{\Lambda}_{}^{H})\mathrm{vec}(\dot{\mathbf{G}}^{}_{rt}) 
\\&=(\mathbf{I}_{T} \otimes \mathbf{S}^H\dot{\mathbf{W}}^{H}\mathbf{H}_{br}^{H}\boldsymbol{\Theta}_{t}^{H})\mathrm{vec}(\dot{\mathbf{G}}^{}_{rt}),
\end{split}
\end{align}
% \begin{align}
% \begin{split}
% \mathrm{vec}(\dot{\mathbf{G}}^{H}_{rt}\boldsymbol{\upsilon}_{t})&=(\boldsymbol{\upsilon}_{t}^{T}\otimes \mathbf{I}_{L})\mathrm{vec}(\dot{\mathbf{G}}^{H}_{rt}) \\&=(s_t\mathbf{w}_{t}^{T}\mathbf{H}_{br}^{T}\boldsymbol{\Theta}_{t}^{T}\otimes \mathbf{I}_{L})\mathrm{vec}(\dot{\mathbf{G}}^{H}_{rt}),
% \end{split}
% \end{align}
where $\mathbf{I}_T$ is an identity matrix of size $T\times T$. Notably, the step (I) in (\ref{eq:067}) is based on the property that $\mathrm{vec}\left(\mathbf{ABC}\right)=\left(\mathbf{C}^{T}\otimes\mathbf{A}\right)\mathrm{vec}\left(\mathbf{B}\right)$ \cite{11145277}. The step (II) in (\ref{eq:067}) is based on the property that $\left(\mathbf{I}^{}\otimes\mathbf{B}^{T}\right)\mathrm{vec}\left(\mathbf{C}^{*}\right) = \left(\mathbf{I}^{}\otimes\mathbf{B}^{H}\right)\mathrm{vec}\left(\mathbf{C}^{}\right)$.
Accordingly, the above formula (\ref{eq:0069}) can be vectorized in the form of
\begin{align}
\label{eq:0072}
\begin{split}
\boldsymbol{\mu}(\dot{\mathbf{G}}_{rt}) =(\mathbf{I}_{T} \otimes \boldsymbol{\Lambda}_{}^{H})\mathrm{vec}(\dot{\mathbf{G}}^{}_{rt}).
\end{split}
\end{align}

Subsequently, the derivative of $\boldsymbol{\mu}(\dot{\mathbf{G}}_{rt}) $ in (\ref{eq:0072}) with respect to the $\dot{\bar{\mathbf{g}}}_{} = \text{vec}(\dot{\mathbf{G}}_{rt}^{})$ is given by \begin{align}
    % \frac{\partial \boldsymbol{\mu}}{\partial \bar{\mathbf{g}}_{rt}}=\sum_{k=1}^{K}\boldsymbol{\upsilon}_{k}^{T}\otimes \mathbf{I}_{L}+ \boldsymbol{\upsilon}_{t}^{T}\otimes \mathbf{I}_{L}.
    \frac{\partial \boldsymbol{\mu}}{\partial \dot{\bar{\mathbf{g}}}_{}}=\mathbf{I}_{T} \otimes \boldsymbol{\Lambda}_{}^{H}.
\end{align}
% where $\mathbf{A}$ is the commutable matrix satisfying that $\mathbf{A} \mathrm{vec}(\dot{\mathbf{G}}^{H}_{rt}) = \mathrm{vec}((\dot{\mathbf{G}}^{H}_{rt})^T)$. 
Hence, the corresponding FIM with respect to $\dot{\bar{\mathbf{g}}}_{}$ is known to be \cite{9364358}
% \begin{align}
%     \mathbf{F}(\bar{\mathbf{g}}_{rt} )=\sum_{k=1}^{K}\frac{1}{\sigma_t^2}\boldsymbol{\upsilon}_k^{*}\boldsymbol{\upsilon}_{k}^{T}\otimes \mathbf{I}_{L}+\frac{1}{\sigma_t^{2}}\boldsymbol{\upsilon}_{t}^{*}\boldsymbol{\upsilon}_{t}^{T}\otimes \mathbf{I}_{L},
% \end{align}
% \begin{align}
%     \mathbf{F}(\bar{\mathbf{g}}_{} )=\frac{1}{\sigma_t^2}(\mathbf{I}_{T} \otimes \boldsymbol{\Lambda}^{}\boldsymbol{\Lambda}_{}^{H}),
% \end{align}
\begin{align}
\label{eq:070}
\begin{split}
    \mathbf{F}(\dot{\bar{\mathbf{g}}}_{} )&=\frac{1}{\sigma_t^2}
    (\mathbf{I}_{T} \otimes \boldsymbol{\Lambda}_{}^{H})^H \cdot
    (\mathbf{I}_{T} \otimes \boldsymbol{\Lambda}_{}^{H})\\
    & \overset{(\text{III})}{=}\frac{1}{\sigma_t^2}
    (\mathbf{I}_{T} \otimes \boldsymbol{\Lambda}_{}^{}) \cdot
    (\mathbf{I}_{T} \otimes \boldsymbol{\Lambda}_{}^{H})\\
    & \overset{(\text{IV})}{=} \frac{1}{\sigma_t^2}
    (\mathbf{I}_{T} \otimes \boldsymbol{\Lambda} \boldsymbol{\Lambda}_{}^{H})\\
    & =  \frac{1}{\sigma_t^2}
    (\mathbf{I}_{T} \otimes \mathbf{R}_{{\Lambda}}),
\end{split}
\end{align}
where $\mathbf{R}_{{\Lambda}} := \boldsymbol{\Lambda} \boldsymbol{\Lambda}_{}^{H}$ is the signal covariance matrix. In addition, the step (III) in (\ref{eq:070}) is based on the property of $(\mathbf{A}\otimes \mathbf{B})^H=\mathbf{A}^H\otimes \mathbf{B}^H$, and the step (IV) in (\ref{eq:070}) follows the property that $(\mathbf{A}\otimes \mathbf{B})(\mathbf{C}\otimes \mathbf{D})=(\mathbf{AC}\otimes \mathbf{BD})$. 
We observe from the above formula (\ref{eq:070}) that: i) the trace of $\mathbf{R}^{-1}_{{\Lambda}}$, i.e., $\mathrm{Tr}(\mathbf{R}^{-1}_{{\Lambda}})$,
reflects the amount of information of the signal covariance matrix $\mathbf{R}_{{\Lambda}}$; ii)
as the rank of $\mathbf{R}_{{\Lambda}}$
increases (i.e., higher DoFs), $\mathrm{Tr}(\mathbf{R}^{-1}_{{\Lambda}})$ decreases, yielding a lower CRLB and better estimation performance. 
This completes the proof.

\end{appendices}

\bibliographystyle{IEEEtran}%
\bibliography{bibfile}
\end{document}